\newtheorem{Theorem}{Theorem}
\begin{document}
%
\title{Joint Caching and Resource Allocation in D2D-Assisted Wireless HetNet}
%
%
%

\author{Wael~Jaafar,~\IEEEmembership{Member,~IEEE,}
	Amina~Mseddi,~\IEEEmembership{Student Member,~IEEE,}
        Wessam~Ajib,~\IEEEmembership{Senior Member,~IEEE,}
        and~Halima~Elbiaze,~\IEEEmembership{Member,~IEEE}
\thanks{W. Jaafar is with Carleton University, ON, Canada. A. Mseddi, W. Ajib and H. Elbiaze are with University of Quebec in Montreal, QC, Canada. \protect \\ E-mail: waeljaafar@sce.carleton.ca \protect \\ E-mail: mseddi.amina@courrier.uqam.ca  \protect \\ E-mail: ajib.wessam@uqam.ca  \protect \\ E-mail: elbiaze.halima@uqam.ca }}


%
%

\markboth{Journal of \LaTeX\ Class Files,~Vol.~XX, No.~X, X~X}{Jaafar \MakeLowercase{\textit{et al.}}: Joint Caching and Resource Allocation in D2D-Assisted Wireless HetNet}
%



\maketitle

\begin{abstract}
5G networks are required to provide very fast and reliable communications while dealing with the increase of users traffic. In Heterogeneous Networks (HetNets) assisted with Device-to-Device (D2D) communication, traffic can be offloaded to Small Base Stations or to users to improve the network's successful data delivery rate. In this paper, we aim at maximizing the average number of files that are successfully delivered to users, by jointly optimizing caching placement and channel allocation in cache-enabled D2D-assisted HetNets. At first, an analytical upper-bound on the average content delivery delay is derived. Then, the joint optimization problem is formulated. The non-convexity of the problem is alleviated, and the optimal solution is determined. Due to the high time complexity of the obtained solution, a low-complex sub-optimal approach is proposed. Numerical results illustrate the efficacy of the proposed solutions and compare them to conventional approaches. Finally, by investigating the impact of key parameters, e.g. power, caching capacity, QoS requirements, etc., guidelines to design these networks are obtained. 
\end{abstract}

\begin{IEEEkeywords}
HetNet, Device-to-Device (D2D), caching,  channel allocation, successful delivery rate (SDR).
\end{IEEEkeywords}

%
\IEEEpeerreviewmaketitle

\section{Introduction}
%
%
%
%
\IEEEPARstart{W}{ith} the fast growth of new mobile applications and increasing demand for heavy cellular traffic such as live streaming, language processing, augmented reality and face recognition, future wireless networks are required to focus not only on communication and connectivity, but also on high caching in order to satisfy such applications' demands \cite{Wang2018}. 

The proposal of Heterogeneous Networks (HetNets) using small cells and different radio technologies improves the Energy Efficiency (EE) and Spectral Efficiency (SE) performances \cite{Agiwal2016}. However, the ultra dense small cell deployment and coexistence/interaction with existing Macro Base Stations (MBS) infrastructure creates new problems of mutual interference, requiring efficient radio resource allocation and interference management techniques \cite{Cheng2016}.
Moreover, Device-to-Device (D2D) communication has been envisioned as an essential component of 5G wireless networks. By allowing two users (UEs) to communicate directly, without going through the cellular network, radio resources can be further exploited to enhance SE, transmission delay, and offload the backhaul's traffic \cite{Ansari}.  
            
On one hand, many works have proposed combining HetNets and D2D to leverage the advantages of D2D in the HetNet architecture. 
Malandrino \textit{et al.} \cite{Malan2015} solve the resource allocation problem in an LTE 2-tier HetNet, supporting in-band D2D communications. They proposed a dynamic programming approach to schedule the download and upload traffic. Ali \textit{et al.} \cite{Ali2016} studied the power allocation and cell selection problems to maximize EE for LTE-A HetNets supporting D2D and relaying. They showed that more UEs can perform in D2D communications with more stringent QoS requirements. 
Huang \textit{et al.} \cite{Huang2016} proposed a centrally controlled framework for a D2D communication underlaying a two-tier cellular network. The objective is to maximize the sum rate of the network with individual transmit power and rate constraints. The formulated problem encompasses cellular, dedicated and shared D2D modes, frequency sharing and power control. 
In \cite{Cao2017}, Cao \textit{et al.} designed an offloading scheme based on the usage of SBSs to offload traffic from MBS, and also to use some UEs as relays to offload traffic from SBSs for users not in the coverage area of a SBS, in  the context of massive machine-type communications. 
Similarly, Tsiropoulos \textit{et al.} \cite{Tsiropoulos2017} proposed a cooperation framework among heterogeneous wireless devices to improve spectrum access and system's capacity.  
In \cite{Naqvi2018}, Naqvi \textit{et al.} optimized EE of D2D communications in a hybrid cellular network composed of millimeter wave and microwave cells, using power control and radio resource allocation. Whereas,     
Hao \textit{et al.} \cite{Hao2018} investigated the EE-SE trade-off for D2D communications underlying HetNets. An effective two-stage solution is proposed to solve the power and spectrum allocation problems.

On the other hand, caching has attracted recent attention thanks to its ability to further reduce the backhaul traffic and eliminate duplicate transmissions of popular content \cite{Wang2018,Bastug2014}. Caching has been investigated for use in different network types \cite{Li2018}. 
In macro-cellular networks, caching is typically available only in the MBSs. Peng \textit{et al.} \cite{Peng2015} proposed content placement to minimize the average download delay of files by users. Blaszczyszyn \textit{et al.} \cite{Blasz2015} maximized the hitting probability when a UE can be served by several MBSs. They showed that optimal caching is related to the signal-to-interference-plus-noise ratio (SINR). Whereas, Kreishah \textit{et al.} \cite{Kreishah2015} investigated cooperative content placement methods among MBSs to minimize both download and caching costs.    
In HetNets, a part of popular content can be cached in the SBSs. 
Shanmugam \textit{et al.} \cite{Shanmugam2013} optimized cooperative content caching in a backhaul constrained HetNet, aiming at minimizing the content download delay. 
Guan \textit{et al.} \cite{Guan2014} designed a caching method for small cells, where users move frequently from a small cell to another. They assumed that users' trajectories are known at the SBSs. In \cite{Yang2014}, Yang \textit{et al.} integrated caching at the user side, where the latter can obtain the content from SBSs or from its cache memory. Energy savings are realized compared to non-caching at users. 
In D2D networks, the main objectives of caching are to improve the area spectral efficiency and provide low backhaul costs \cite{Asadi}. D2D networks leverage the caching memory of D2D devices to allow content delivery without going through base stations.   
Ji \textit{et al.} \cite{Ji2016} investigated the throughput-outage tradeoff of wireless networks, where clustered device caching via D2D communications is exploited. Zhang \textit{et al.}
\cite{Zhang2016} optimized D2D link scheduling and power allocation to maximize the system's throughput. To tackle the non-convexity of the original problem, a decomposition into a link scheduling and a power allocation problem is processed.  
Chen \textit{et al.} \cite{Chen2017} maximized the offloading gain of cache-enabled D2D networks by jointly optimizing caching and scheduling policies. 
In \cite{Yi2018}, Yi \textit{et al.} proposed a traffic offloading framework with social-aware D2D content sharing and caching. They formulated a cost maximization problem, parametrized by power control, channel allocation, link scheduling and reward design. The basis transformation method is introduced to solve the problem. Through theory and simulations, the superiority of this method in improving social welfare and network capacity is shown.  

Previous works investigated macro-cellular, HetNet, and D2D caching separately, or in a very limited collaborative way. Indeed, only recently caching optimization in multi-tier networks has attracted interest.  
Yang \textit{et al.} \cite{Yang2016} proposed and analyzed cache-based content delivery in a three-tier HetNet. While proactively caching popular content at SBSs and in a part of the UEs, they investigated theoretically the achievable maximum traffic load and global throughput gains. Results prove that the global throughput of cache-enabled systems can increase up to 57\% compared to systems without caching abilities. 
In \cite{Li2017}, Li \textit{et al.} presented a caching algorithm to minimize the average transmission delay in a macro-cell with D2D support. The proposed greedy algorithm performs better than popularity-based naive caching policy. 
Li \textit{et al.} \cite{Li2018_2} developed a general N-tier cache-enabled HetNet framework, where MBS, SBSs and pico BSs coexist. They proposed an optimal distributed caching scheme to maximize the successful delivery probability. They showed that the optimal solution depends on cache sizes and base stations densities. 
In \cite{Yang2018}, Yang \textit{et al.} proposed a similar framework to \cite{Li2018_2}. However, they aimed at minimizing the average file transmission delay through caching and bandwidth allocation. Their proposed low-complex sub-optimal solution achieved cache-buffer tradeoff, and is superior to conventional strategies not considering this tradeoff. 
Quer \textit{et al.} \cite{Quer2018} proposed a proactive caching strategy at both users and SBSs, where user mobility and different classes of user's interests for content are assumed, and the objective is to minimize the system's cost (e.g. energy, bandwidth). Optimal caching policy was obtained using standard integer programming optimization tools. 
In our work \cite{Jaafar2018}, we optimized caching and bandwidth allocation to minimize the average transmission delay of a D2D-assisted HetNet. We divided the problem into a bandwidth allocation and a caching problems. The first problem is optimally solved. Then, for a given bandwidth allocation strategy, two sub-optimal caching approaches are proposed. 
Finally, Amer \textit{et al.} \cite{Amer2018} designed a D2D caching framework with inter-cluster cooperation, where nodes of the same cluster cooperate through D2D communications, while nodes of different clusters exchange data through cellular transmissions. The formulated problem aims at minimizing the network average delay under caching, queuing and energy constraints. 

These works are promising, but they present some lacks. Indeed, \cite{Yang2016} results are limited, since the same copy of contents is cached in all nodes of the same-tier. This situation restrains the real potential of the multi-tier network. For \cite{Li2017}, no theoretical analysis was realized and optimization is limited to caching placement only, while \cite{Li2018_2,Yang2018} did not consider D2D communications. 
Also, \cite{Quer2018} focused on caching, and ignored the effects of wireless channels and transmit power conditions. In \cite{Jaafar2018}, only interference-free channels and probabilistic bandwidth allocation are assumed. Moreover, the formulated problem is not optimally solved. Finally, \cite{Amer2018} is limited to a two-tier network.



 

In an attempt to overcome the shortcomings of related works, we propose in this paper to investigate a multi-tier D2D-assisted HetNet, where caching at all tiers is considered. The main objective is to maximize the Successful Delivery Rate (SDR), under channel resources, caching, and delay constraints.

The contributions of this paper are summarized as follows:
\begin{enumerate}
	\item We analytically derive an upper-bound, based on Chernoff bound, on the average content delivery delay. The latter is used in the formulation of the joint caching and resource allocation optimization problem, aiming at maximizing the SDR of the system. Note that resource allocation in the formulated problem refers to channel allocation. The non-convex non-linear obtained problem is transformed into an Integer Linear Program (ILP), in order to achieve convexity. Then, the optimal solution is obtained using the IBM CPLEX solver.
	\item Due to the complexity in obtaining the optimal solution, a low-complex sub-optimal algorithm is proposed. First, the resource (i.e., channel) allocation and caching placement problems are separated. Then, based on a geometrical approach, we propose Algorithm \ref{Algo0} to efficiently allocate channels to users. For a given resource allocation strategy, we propose an iterative caching placement policy (detailed in Algorithm \ref{Algo00}) based on the files' popularity across the users' classes.
	\item Through numerical results, the superiority of the proposed solutions is illustrated compared to conventional approaches. Moreover, the impacts of key parameters, e.g. transmit power, cache size, packet length, number of users and SBSs, and delay requirements, are presented. From these results, guidelines to design D2D-assisted HetNets are deduced.  
\end{enumerate}

The rest of the paper is organized as follows. In Section II, the system model is presented. Section III derives the content delivery delay and formulates the optimization problem. Then, section IV details the proposed solutions. In Section V, numerical results are presented. Finally, Section VI closes the paper.

\begin{figure*}
	\centering
	\includegraphics[width=400pt]{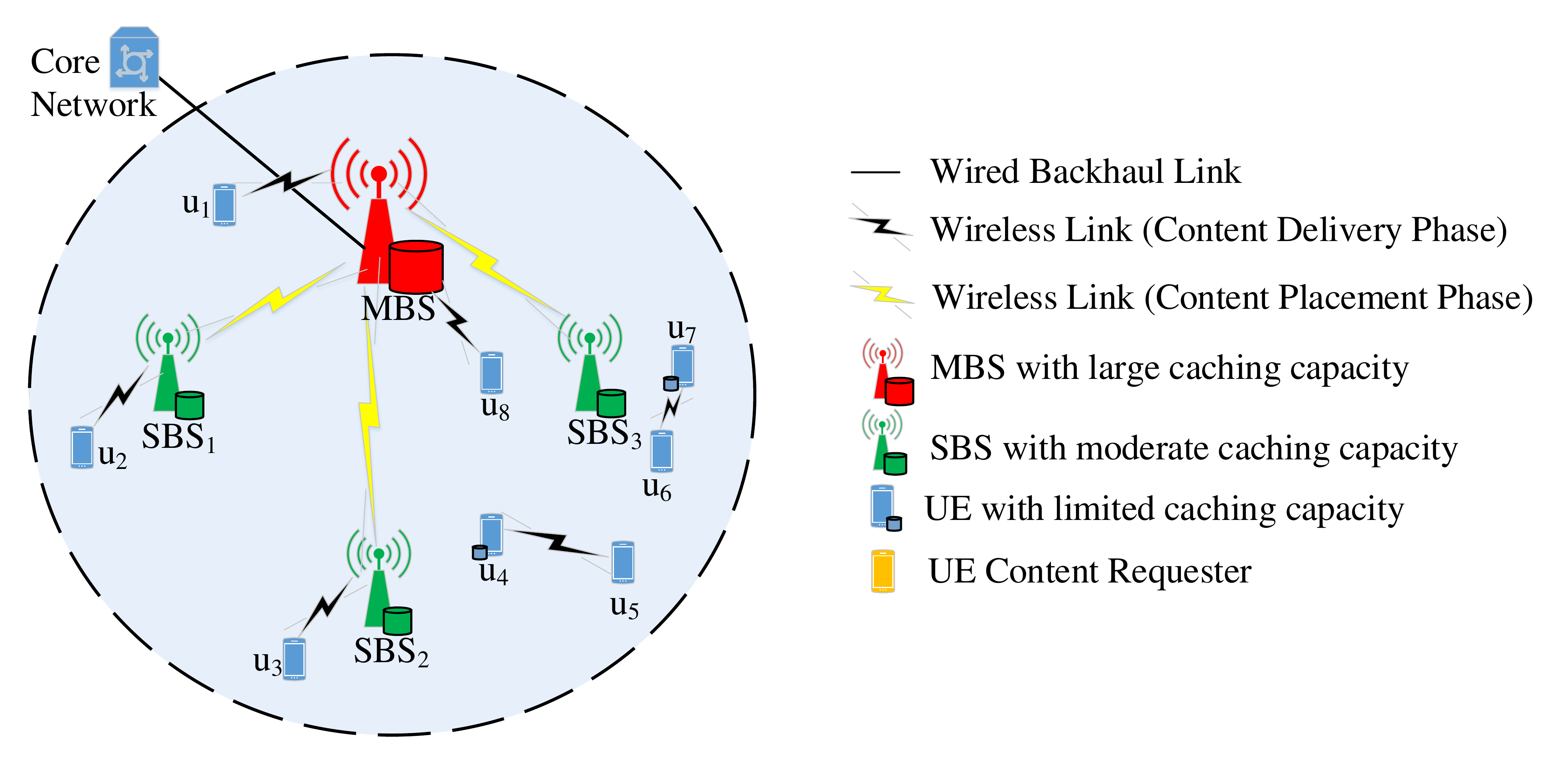}
	\caption{System Model}
	\label{Fig:System_Model}
\end{figure*}

\section{System Model}
We begin by modelling the network and wireless channels. Then, the caching model is presented, followed by the communication model. 

\subsection{Network and Channel Model}
The network consists of one MBS, $S$ SBSs and $U$ UEs, where SBS $s \in \mathcal{S}\equiv\{s_1,\ldots,s_S\}$ and UE $u \in \mathcal{U}\equiv\{u_1,\ldots,u_U\}$. Nodes of UEs-tier are deployed randomly within the MBS's coverage area as shown in Fig. \ref{Fig:System_Model}. In practice, the assumption that $U>S>1$ holds.

In HetNets, usually different MBSs share the frequency resource and SBSs reuse the same resources to save bandwidth. Moreover, D2D communication either uses dedicated cellular resources (Inband Overlay D2D) or reuses the same cellular resources as the base stations (Inband Underlay D2D)\cite{Asadi}. We assume that inter-cell interference is neglected thanks to control and scheduling among MBSs. However, we assume that intra-cell interference may occur when channels are reused at the same time.
Without loss of generality, we assume that $W$ orthogonal frequency channels of equal bandwidth $B$ are available in the network, where resource $w \in \mathcal{W}\equiv\{w_1,\ldots,w_W\}$. We define by $r_{u,w}$ the binary variable indicating whether channel $w$ is allocated to the transmission having user $u$ as its receiver or not, where $\sum_{w=1}^{W}r_{u,w}=1$. 

Moreover, we assume that time is divided into slots (TS), and in each TS $t \in \mathbb{N}$, transmissions may occur for one of the two following purposes: 1) off-peak content placement, where caching memories of the nodes are updated during low-activity periods within the network, and 2) content delivery to satisfy users' requests. This paper considers only transmissions related to content delivery. This is justified by the fact that off-peak content placement has no cost or impact on the resources' availability for content delivery.  

We assume that noise on wireless channels is independent identically distributed (i.i.d.) additive white Gaussian noise (AWGN) circularly symmetric with zero mean and variance $\sigma_0^2$, and that the channel coefficients are i.i.d.. We assume also that the channel coefficients are constant within a TS of duration $\tau$, and vary from a TS to another. 


\subsection{Caching Model}
In our network, we assume that $F$ files (or segments) of the same size $L$ bits belong to a library $\mathcal{F}\equiv\{1,\ldots,F\}$ in the core network,  \cite{Yang2016}. Without loss of generality, the same size of files can be obtained by splitting large files into equally sized segments. We assume that MBS, SBS $s$ and UE $u$ have caching capacities $C_m$ bits, $C_s$ bits and $C_u$ bits respectively, where $F> C_m > C_s> C_u$, $\forall s \in \mathcal{S}$ and $\forall u \in \mathcal{U}$.

We assume that user $u$ belongs to a specific \textit{class of interest} $k \in \mathcal{K}=\{1,\ldots,K\}$ that provides a specific ranking order of the file popularity. We introduce the probability that user $u$ belongs to a specific class $k$, defined as $p_{u}^k \in [0,1]$, where $\sum_{k=1}^{K}p_u^k=1$, $\forall u \in \mathcal{U}$. The definition of $p_{u}^k$ is motivated by the fact that the user's interests may vary in time due to its environment, behavior or events \cite{Quer2018}. 

For each class $k$, the file popularity follows Zipf distribution, and the probability that a user $u$ in class $k$ requests a file $f$ can be given by:
\begin{equation}
q_{f}^k=\frac{\eta(f,k)^{-\beta}}{\sum \limits_{f'=1}^F \eta(f',k)^{-\beta}}, \forall f \in \mathcal{F}, \forall k \in \mathcal{K},     
\end{equation}
where $\eta(f,k)$ (resp. $\eta(f',k)$) is the rank of file $f$ (resp. of file $f'$) for a user in class $k$, and $\beta \geq 0$ reflects how skewed the popularity distribution is, that means larger $\beta$ exponents correspond to higher content reuse, i.e., the first few popular files account for the majority of requests. 

In this paper, we assume that a file cannot be cached at more than one node within the system. This is motivated by the fact that redundant caching may lead to a fast saturation of storage resources, hence slowing down caching updates when old files need to be replaced by new ones. Moreover, caching at node $i \in \mathcal{C}=\mathcal{U}\cup \mathcal{S} \cup \{\text{MBS}\}$ is represented by a binary vector $\textbf{c}_i =\left[c_{i,1}, \ldots, c_{i,F}\right]$, where $c_{i,f}$ indicates whether node $i$ caches file $f$ or not \cite{Li2018}. 

\subsection{Communication Model}
When user $u$ requests file $f$, it starts by checking its own cache memory. If the content is available locally, it is obtained directly without any delay. Otherwise, the MBS will redirect the request to the most adequate source node. We assume that the MBS has knowledge of the nodes' caching status, as well as the statistics of the channel states within its cell. Moreover, we assume that: 1) A file transmission occupies one channel resource only; 2)
A content request is satisfied by one node only; 3)
A transmission can occupy several successive time slots; 4) The transmit power of MBS, SBS $s$ and user $u$ are fixed to $P_{m}$, $P_S$ and $P_U$ respectively.
Finally, only MBS and SBSs can communicate with several users simultaneously using orthogonal channels.

\section{Content Delivery Delay Expression and Problem Formulation}
In this section, we define the macro-cell's average content delivery delay and we derive its upper bound expression. Then, the latter is used in the formulation of the optimization problem.

\subsection{Content Delivery Delay and Upper Bound Derivation}
In this paper, the transmission delay between two nodes $i$ and $j$ is defined as the minimal number of time slots to transmit a given file $f$ of length $L$ from $i$ to $j$. 
For a time-varying channel, the average content delivery delay between node $i$ and node $j$ can be written as:
\begin{equation}
T_{ij}=\mathrm{min}\left\{T ; L \leq \sum \limits_{t=1}^{T}\tau R_{ij}(t) \right\},
\end{equation}
where $R_{ij}(t)$ is the channel's data rate in time slot $t$. Without loss of generality, for a given channel $w$, the data rate between node $i$ and node $j$ can be given by:
\begin{equation}
\label{eq:Rt}
R_{ij}(t)=B\; \mathrm{log}_2 \left(1+\frac{P_i^w ||h_{ij}(t)||^2}{ \sigma_0^2 + \sum_{i' \in \mathcal{I}_j}P_{i'}^w ||h_{i'j}(t)||^2 } \right)= B \; \mathrm{log}_2\left(1+\text{SINR}_{ij}(t)\right),
\end{equation} 
where $P_i^w$ is the transmit power of node $i \in \mathcal{C}$, $h_{ij}(t)=h'_{ij}(t){(\frac{d_{ij}}{d_0})}^{-\alpha}$ is the Rayleigh channel coefficient capturing both short-scale $h'_{ij}(t)$ and long-scale $\left(d_{ij}/d_0\right)^{-\alpha}$ fading, with $d_0$ is a reference distance, $||h_{ij}(t)||^2$ is the channel gain following an exponential distribution of zero mean and variance $(\frac{d_{ij}}{d_0})^{-\alpha}$, $\sum_{i' \in \mathcal{I}_j}P_{i'}^w ||h_{i'j}(t)||^2$ is the interference signal, and $\mathcal{I}_j$ is the set of interferers on transmission between $i$ and $j$.
The average content delivery delay of link $i-j$, denoted by $\bar{T}_{ij}$, can be defined as follows \cite{Peng2015}:  
\begin{equation}
\bar{T}_{ij}=\mathop{\mathbb{E}} \left[T_{ij} \right]=\sum \limits_{T=1}^{+\infty} \mathop{\mathbb{P}}\left[ T_{ij}> T\right]  
\end{equation}
where $\mathop{\mathbb{E}}$ is the expectation operator and $\mathop{\mathbb{P}}\left[ T_{ij}> T \right]$ is the probability that $T_{ij}$ is above $T$. 
This probability is determined by:
\begin{equation}
\label{eq:ProbT_ij}
\mathop{\mathbb{P}}\left[ T_{ij}>T\right]
= \mathop{\mathbb{P}}\left[ \sum \limits_{t=1}^{T} \mathrm{log}_2\left(1+\text{SINR}_{ij}(t)\right)< \frac{L}{\tau B}\right]=  \mathop{\mathbb{P}}\left[ \sum_{t=1}^T Y_t< \frac{L}{\tau B}\right]=\mathop{\mathbb{P}}\left[ Z_T< \frac{L}{\tau B}\right],
\end{equation}
where $Y_t=\mathrm{log}_2\left(1+\text{SINR}_{ij}(t)\right)$ and $Z_T=\sum_{t=1}^T Y_t$. 

\begin{Theorem}
	\label{Theorem0}
	$\mathop{\mathbb{P}}\left[ T_{ij}>T\right]$ is upper-bounded by:
	\begin{eqnarray}
	\label{eq:Chernoff1}
	\mathop{\mathbb{P}}\left[ T_{ij}>T\right]&\leq& \zeta_0(T):= \min_{t>0} \frac{e^{\frac{t L}{\tau B}+\frac{T}{\vartheta_{ij}} }}{\vartheta_{ij}^{\frac{t T}{\mathrm{ln}(2)}}}\left[\Gamma\left( 1-\frac{t}{\mathrm{ln}(2)},\frac{1}{\vartheta_{ij}} \right)\right]^T,
	\end{eqnarray}
	when $\mathcal{I}_j=\emptyset$, and by
	\begin{equation}
	\label{eq:Chernoff2}
	\mathop{\mathbb{P}}\left[ T_{ij}>T\right]\leq \zeta_1(T,\mathcal{I}_j):= \min_{t>0} \frac{\zeta_0(T)}{\vartheta_{ij}^{T \left(|\mathcal{I}_j|-1\right)}} \left(1+\sum \limits_{i'=1}^{|\mathcal{I}_j|}\vartheta_{i'j} \right)^T,
	\end{equation}
	when $\mathcal{I}_j \neq \emptyset$, where $\vartheta_{ij}=\frac{P_i^w }{\sigma_0^2}\cdot \left(\frac{d_{ij}}{d_0}\right)^{-\alpha}$, $\Gamma(a,x)=\int_{x}^{+\infty}s^{a-1}e^{-s} ds$ is the incomplete gamma function and $|.|$ is the cardinality operator.
\end{Theorem}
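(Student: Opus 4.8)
The plan is to prove both bounds with a single Chernoff-type argument applied to the \emph{lower} tail of $Z_T$, carrying out the computation exactly in the interference-free case and then treating the interference as an inflation of the noise floor. Starting from \eqref{eq:ProbT_ij}, I would write the event as $\{Z_T<L/(\tau B)\}$ and apply the exponential Markov (Chernoff) inequality: for every $t>0$,
\[
\mathbb{P}[T_{ij}>T] = \mathbb{P}\!\left[Z_T<\tfrac{L}{\tau B}\right] = \mathbb{P}\!\left[e^{-tZ_T}>e^{-tL/(\tau B)}\right] \le e^{tL/(\tau B)}\,\mathbb{E}\!\left[e^{-tZ_T}\right].
\]
Because the channel coefficients are i.i.d.\ across time slots, the variables $Y_t=\log_2(1+\mathrm{SINR}_{ij}(t))$ are i.i.d., so $\mathbb{E}[e^{-tZ_T}]=\big(\mathbb{E}[e^{-tY_1}]\big)^T$. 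Minimizing over $t>0$ will produce the $\min_{t>0}$ in \eqref{eq:Chernoff1}--\eqref{eq:Chernoff2}, and the factor $e^{tL/(\tau B)}$ is exactly the $e^{tL/(\tau B)}$ appearing in the numerator of $\zeta_0(T)$. What remains is to evaluate or bound the single-slot transform $\mathbb{E}[e^{-tY_1}]=\mathbb{E}[(1+\mathrm{SINR}_{ij})^{-t/\ln(2)}]$, where I used $e^{-t\log_2 x}=x^{-t/\ln(2)}$.

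For $\mathcal{I}_j=\emptyset$ we have $\mathrm{SINR}_{ij}=\vartheta_{ij}E$, where $E=\|h_{ij}\|^2/(d_{ij}/d_0)^{-\alpha}$ is a unit-mean exponential variable; this is precisely the normalization that defines $\vartheta_{ij}=\frac{P_i^w}{\sigma_0^2}(d_{ij}/d_0)^{-\alpha}$. Hence $\mathbb{E}[e^{-tY_1}]=\int_0^\infty (1+\vartheta_{ij}x)^{-t/\ln(2)}e^{-x}\,dx$, which I would evaluate with the substitution $v=x+1/\vartheta_{ij}$ (so that $1+\vartheta_{ij}x=\vartheta_{ij}v$ and the lower limit becomes $1/\vartheta_{ij}$). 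Since the exponent is $-t/\ln(2)=\big(1-\tfrac{t}{\ln(2)}\big)-1$, the integral collapses to $e^{1/\vartheta_{ij}}\vartheta_{ij}^{-t/\ln(2)}\Gamma\!\big(1-\tfrac{t}{\ln(2)},\tfrac{1}{\vartheta_{ij}}\big)$, the upper incomplete gamma function. Raising to the $T$-th power yields $e^{T/\vartheta_{ij}}\vartheta_{ij}^{-tT/\ln(2)}[\Gamma(\cdot)]^T$, and multiplying by $e^{tL/(\tau B)}$ reproduces the expression inside the $\min_{t>0}$ of $\zeta_0(T)$. This part is routine and I expect no obstacle.

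For $\mathcal{I}_j\neq\emptyset$ I would write $1+\mathrm{SINR}_{ij}=\frac{\sigma_0^2+I+S}{\sigma_0^2+I}$ with signal $S=P_i^w\|h_{ij}\|^2$ and aggregate interference $I=\sum_{i'\in\mathcal{I}_j}P_{i'}^w\|h_{i'j}\|^2$. The key idea is that, conditioned on $I$, the quantity $\sigma_0^2+I$ acts as an effective noise, so the single-slot transform is the interference-free one with $\vartheta_{ij}$ replaced by the \emph{random} $\vartheta_{ij}/(1+I/\sigma_0^2)$. Integrating out $S$ exactly as in the previous paragraph gives a conditional transform of the form $e^{y}\,y^{t/\ln(2)}\,\Gamma\!\big(1-\tfrac{t}{\ln(2)},y\big)$ with $y=(1+I/\sigma_0^2)/\vartheta_{ij}$, after which I must average over the aggregate interference. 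I expect \textbf{this interference average to be the main obstacle}, since no closed form survives it: the incomplete gamma function couples $S$ and $I$ nontrivially.

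To reach the stated product form I would bound rather than evaluate this residual. The route I anticipate is: (i) use monotonicity of $\Gamma(a,\cdot)$ in its second argument (it is decreasing, and $y\ge 1/\vartheta_{ij}$) to peel off the interference-free factor, recovering $\zeta_0(T)$; (ii) control the leftover interference contribution by an elementary moment/Jensen estimate on the sum, using $1+\sum_{i'}\vartheta_{i'j}=\mathbb{E}[\sigma_0^2+I]/\sigma_0^2$ with $\vartheta_{i'j}=\frac{P_{i'}^w}{\sigma_0^2}(d_{i'j}/d_0)^{-\alpha}$, which yields the factor $\big(1+\sum_{i'}\vartheta_{i'j}\big)^T$; and (iii) track how the normalization constants of the $|\mathcal{I}_j|$ interferer densities accumulate, which I anticipate produces the power $\vartheta_{ij}^{-T(|\mathcal{I}_j|-1)}$. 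Collecting these per-slot factors gives $\zeta_1(T,\mathcal{I}_j)=\zeta_0(T)\,\vartheta_{ij}^{-T(|\mathcal{I}_j|-1)}\big(1+\sum_{i'}\vartheta_{i'j}\big)^T$, matching \eqref{eq:Chernoff2} up to the harmless observation that the two extra factors are $t$-independent and pass through the minimization. The delicate points will be making steps (i)--(iii) simultaneously valid across the admissible range of $t>0$ and verifying the exact exponent of $\vartheta_{ij}$ in the bookkeeping.
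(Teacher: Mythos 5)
Your Chernoff step and your interference-free computation are correct and coincide with the paper's proof of (\ref{eq:Chernoff1}): the paper likewise applies the exponential Markov inequality, factorizes over the i.i.d.\ slots, and arrives at the per-slot transform $\mathop{\mathbb{E}}\left[e^{-tY_n}\right]=e^{1/\vartheta_{ij}}\,\vartheta_{ij}^{-t/\mathrm{ln}(2)}\,\Gamma\left(1-\frac{t}{\mathrm{ln}(2)},\frac{1}{\vartheta_{ij}}\right)$ as in (\ref{eq:develop4}); it gets there by deriving the cdf and pdf of $Z_n=e^{-tY_n}$ and integrating, rather than by your direct substitution, which is a purely cosmetic difference. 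The genuine gap is in your plan for $\mathcal{I}_j\neq\emptyset$. Once you condition on the interference $I$ and peel off $\Gamma\left(1-\frac{t}{\mathrm{ln}(2)},y\right)\leq\Gamma\left(1-\frac{t}{\mathrm{ln}(2)},\frac{1}{\vartheta_{ij}}\right)$ by monotonicity, the leftover factor is $\mathop{\mathbb{E}}_I\left[e^{y}\,y^{t/\mathrm{ln}(2)}\right]$ with $y=(1+I/\sigma_0^2)/\vartheta_{ij}$, i.e.\ an exponential moment of the aggregate interference. For Rayleigh-faded interferers this moment is \emph{infinite} whenever some interferer satisfies $\vartheta_{i'j}\geq\vartheta_{ij}$ (an exponential random variable has a finite moment generating function only below its rate), and when it is finite it produces factors of the form $\prod_{i'}\left(1-\vartheta_{i'j}/\vartheta_{ij}\right)^{-1}$, not $1+\sum_{i'}\vartheta_{i'j}$. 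Your step (ii) cannot repair this: Jensen and first-moment estimates bound expectations of convex, exponentially growing functions from \emph{below}, never from above, and indeed no bound on $\mathop{\mathbb{E}}_I\left[e^{y}y^{t/\mathrm{ln}(2)}\right]$ in terms of $\mathop{\mathbb{E}}[y]$ alone can exist since the former can be infinite while the latter is finite. The finiteness of the conditional transform rests precisely on the cancellation $e^{y}\Gamma(a,y)\sim y^{a-1}$ as $y\to\infty$, which your decomposition throws away; likewise, the exponent $\vartheta_{ij}^{-T(|\mathcal{I}_j|-1)}$ is not produced by any bookkeeping of interferer density normalizations.

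What the paper does instead, and what your argument is missing, is to avoid conditioning altogether. It starts from the closed-form ccdf of the SINR with Rayleigh interferers, $1-F_{X_n}(x)=\vartheta_{ij}e^{-x/\vartheta_{ij}}\big/\prod_{i'=1}^{|\mathcal{I}_j|}\left(\vartheta_{ij}+\vartheta_{i'j}\,x\right)$ in (\ref{eq:develop5}) (taken from \cite{Yang2018}), differentiates to obtain the pdf $f_{Z_n}^0$ of $Z_n=e^{-tY_n}$ in (\ref{eq:develop6}), and then bounds this pdf \emph{pointwise} by the interference-free pdf times a constant: since $0<z\leq1$ implies $2^{-\mathrm{ln}(z)/t}-1\geq0$, each factor $\left(\vartheta_{ij}+\vartheta_{i'j}\left(2^{-\mathrm{ln}(z)/t}-1\right)\right)^{-1}$ is at most $\vartheta_{ij}^{-1}$, which gives $f_{Z_n}^0(z)\leq f_{Z_n}(z)\left(1+\sum_{i'}\vartheta_{i'j}\right)\big/\vartheta_{ij}^{|\mathcal{I}_j|-1}$. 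Because the damping term $e^{-x/\vartheta_{ij}}$ stays inside the density rather than being separated from the gamma tail, no exponential moment of the interference ever appears, and the factor $\vartheta_{ij}^{-T(|\mathcal{I}_j|-1)}$ in (\ref{eq:Chernoff2}) is exactly what results from lower-bounding the product $\prod_{i'}\left(\vartheta_{ij}+\vartheta_{i'j}\,x\right)$ by $\vartheta_{ij}^{|\mathcal{I}_j|}$. To complete your proof you would need to replace your steps (i)--(iii) by this pointwise density comparison (or an equivalent unconditional bound on the per-slot transform); as written, the interference part of your plan fails.
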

\begin{proof}
	Please refer to Appendix A.
\end{proof}
Consequently, the average content delivery delay of link $i-j$ can be upper bounded by:
\begin{equation}
\label{eq:ProbTij_UB}
\bar{T}_{ij}\leq   G_{ij}(\mathcal{I}_j):=\sum \limits_{T=1}^{+\infty} g_{ij}(T,\mathcal{I}_j),
\end{equation}
where $g_{ij}(T,\emptyset)=\zeta_0(T)$ and $g_{ij}(T,\mathcal{I}_j)=\zeta_1(T,\mathcal{I}_j)$, $\forall \; \mathcal{I}_j \neq \emptyset$.

We define by $\bar{D}_{u,f}$ the delay of delivering file $f$ to user $u$. According to the defined communication model, a file request is either served locally, or by another node. For the sake of delay expression $\bar{D}_{u,f}$ tractability and simplicity, we assume that when a channel is shared by simultaneous transmissions, interference is dominated by one component only. Hence, a communication is at most interfered by another transmission only.   
The previous assumption lead to the derivation of an upper bound on the average delay $\bar{D}_{u,f}$ given by:
	\begin{eqnarray}
	\label{eq:Dufk_LB}
	\bar{D}_{u,f}&\leq& G_{u,f}:=\left( 1-c_{u,f}\right)\nonumber \\ 
	&\cdot& \left[ \mathop{\mathbb{P}}\left[ \mathcal{T}_{u,f}=\emptyset\right] \left( \bar{T}_{0}+\mathop{\mathbb{P}}\left[ \mathcal{I}_{u}=\emptyset\right] {G}_{mu}(\emptyset) + \sum \limits_{\substack{y \in \mathcal{C}\backslash \{ u, m \}}  }\mathop{\mathbb{P}}\left[ \mathcal{I}_{u}=\{ y \}\right] {G}_{mu}(\mathcal{I}_u)\right)\right. \nonumber \\
	&+& \left. \sum \limits_{x \in \mathcal{C} \backslash\{u\}} \mathop{\mathbb{P}}\left[ \mathcal{T}_{u,f}=\{x\}\right] \left( \mathop{\mathbb{P}}\left[ \mathcal{I}_{u}=\emptyset\right] {G}_{xu}(\emptyset)+ \sum \limits_{\substack{y \in \mathcal{C} \backslash \{ u, x \}}} \mathop{\mathbb{P}}\left[ \mathcal{I}_{u}=\{y \}\right] {G}_{xu}(\mathcal{I}_u)  \right)\right],
	\end{eqnarray}
where $\mathcal{T}_{u,f}$ is the set of potential transmitters of $f$ to user $u$, $\bar{T}_{0}$ is the average transmission delay on the backhaul link, $\mathop{\mathbb{P}}\left[ \mathcal{T}_{u,f}=\emptyset\right]$ is the probability that no node is a potential transmitter of file $f$ to user $u$, $\mathop{\mathbb{P}}\left[ \mathcal{T}_{u,f}=\{x\}\right]$ is the probability that node $x$ is the potential transmitter of file $f$ to user $u$, $\mathop{\mathbb{P}}\left[ \mathcal{I}_{u}=\emptyset\right]$ is the probability that no node is interfering on user $u$'s communication, and $\mathop{\mathbb{P}}\left[ \mathcal{I}_{u}=\{y\}\right]$ is the probability that the transmission of node $y$ interferes on user $u$'s data reception.

$\mathop{\mathbb{P}}\left[ \mathcal{T}_{u,f}=\{x\}\right]$ can be expressed by:
\begin{equation}
\label{eq:Prob_T}
\mathop{\mathbb{P}}\left[ \mathcal{T}_{u,f}=\{x\}\right]= {c}_{x,f} \prod \limits_{x' \in \mathcal{C}\backslash\{u,x\}} \left( 1-{c}_{x',f} \right), 
\end{equation}
whereas $\mathop{\mathbb{P}}\left[ \mathcal{T}_{u,f}=\emptyset\right]=\prod \limits_{x \in \mathcal{C}\backslash\{u\}} \left( 1-{c}_{x,f} \right)$. In addition, $\mathop{\mathbb{P}}\left[ \mathcal{I}_{u}=\{y\}\right]$ can be given by:
\begin{eqnarray}
\label{eq:PSu}
\mathop{\mathbb{P}}\left[ \mathcal{I}_{u}=\{y\}\right]&=&\sum \limits_{w=1}^W  \sum \limits_{\substack{u' \in \mathcal{U} \\ \backslash \{u,x,y\}}} \Bigg(  r_{u,w} r_{u',w} \bigg( \sum \limits_{k'=1}^{K} \sum \limits_{\substack{f'=1\\ f' \neq f}}^F p_{u'}^{k'} q_{f'}^{k'} \mathop{\mathbb{P}}\left[\mathcal{T}_{u',f'}=\{y\} \right] \bigg) \Bigg), 
\end{eqnarray}
where the first two sums correspond to the channels allocated to the users and the next two sub-sums correspond to the class and the file requested by user $u' \neq u$. Finally, $\mathop{\mathbb{P}}\left[ \mathcal{I}_{u}=\emptyset\right]$ can be expressed by:
\begin{equation}
\label{eq:PS_emp}
\mathop{\mathbb{P}}\left[ \mathcal{I}_{u}=\emptyset\right]=\sum \limits_{w=1}^W  \left( r_{u,w}\prod \limits_{{u'\in \mathcal{U} \backslash\{u\}}}\left(1-r_{u',w}\right) \right).
\end{equation}


\subsection{Problem Formulation}
In this section, we formulate the joint caching and channel allocation optimization problem, aiming at maximizing SDR. 
The problem is formulated as follows (P1):
\begin{subequations}
\begin{align}
\max_{\mathbf{X},\mathbf{C},\mathbf{R}} & \quad 
O=\frac{1}{U}\sum \limits_{f=1}^F \sum_{u=1}^U \sum \limits_{k=1}^K p_u^k q_f^k x_{u,f}   \nonumber \\
\label{c1} 
\text{s.t.}\quad & D_{th}- G_{u,f}\geq G \left(x_{u,f}-1\right),\; \forall u \in \mathcal{U},\; \forall f \in \mathcal{F},   \tag{P1.a} \\
\label{c2}& \sum_{f=1}^{F} c_{i,f} \leq C'_i= C_i/L, \;\forall i \in \mathcal{C}, \tag{P1.b}\\
\label{c3}& \sum_{i=1}^{|\mathcal{C}|} c_{i,f} \leq 1, \;\forall f \in \mathcal{F}, \tag{P1.c}\\
\label{c4}&\sum_{w=1}^{W}r_{u,w}= 1, \;\forall u \in \mathcal{U}, \tag{P1.d}\\
\label{c5}& \sum_{u=1}^{U}r_{u,w}\leq R, \;\forall w \in \mathcal{W}, \tag{P1.e} \\
\label{c6}& x_{u,f} \in \{0,1\},\; \forall u \in \mathcal{U},\; \forall f \in \mathcal{F}, \tag{P1.f} \\
\label{c7}& c_{i,f} \in \{0,1\},\; \forall i \in \mathcal{C},\; \forall f \in \mathcal{F}, \tag{P1.g} \\
\label{c8}& r_{u,w}\in \{0,1\},\; \forall u \in \mathcal{U},\; \forall w \in \mathcal{W}, \tag{P1.h}
\end{align}
\end{subequations}
where $\textbf{X}=\left[ x_{u,f} \right]_{U \times F}$ is the matrix indicating the successful deliveries in the network, $\textbf{C}=\left[ c_{i,f} \right]_{|\mathcal{C}| \times F}$ is the caching placement matrix, and $\textbf{R}=\left[ r_{u,w} \right]_{U \times W}$ is the channel allocation matrix. 
The first constraint (\ref{c1}) emphasizes the fact that in average, the delivery delay of file $f$ to user $u$ cannot exceed a threshold $D_{th}$, where we opt for the use of the upper-bound $G_{u,f}$, and $G$ is a large constant guaranteeing that the constraint is valid at all times. Constraints (\ref{c2})-(\ref{c3}) express caching placement with memory limitation at node $i$ and caching redundancy limitation to $1$ within the macro-cell for any file $f$. Whereas, (\ref{c4})-(\ref{c5}) say that the same and unique channel is allocated for transmissions towards one user, and a channel $w$ can be allocated simultaneously to $R$ users at most. It is to be noted that $R$ is set to satisfy $R \geq \lceil{\frac{U}{W}}\rceil$ for fair channel allocation, where $\lceil{.}\rceil$ is the ceiling function. Finally, (\ref{c6})-(\ref{c8}) reflect the binary nature of the optimization variables. 

The defined optimization problem is non-linear due to the several product terms within the average content delivery delay expression in (\ref{c1}), and is non-convex due to the binary constraints. Nevertheless, in what follows, we propose a method to transform the problem and solve it optimally.

\section{Proposed Solutions}
We start in this section by proposing an approach to solve the problem optimally. Then, we propose two heuristic algorithms to find a sub-optimal solution with low complexity.
\subsection{Optimal Solution Design}
In order to make problem (P1) tractable, we propose to substitute iteratively the products of the binary variables in function $G_{u,f}$ by new single variables \cite{Quer2018,Crama2011}. 
 The resulting new problem (P2) can be expressed by:
\begin{subequations}
	\begin{align}
	\max_{\mathbf{X},\mathbf{C},\mathbf{R}} & \quad 
	O=\frac{1}{U}\sum \limits_{f=1}^F \sum_{u=1}^U \sum \limits_{k=1}^K p_u^k q_f^k x_{u,f}   \nonumber \\
	\text{s.t.}\quad & \text{(\ref{c1} - \ref{c8})}, \tag{P2.a - P2.h} \\
	\label{c21}
	&\Omega_{U,w,|\mathcal{C}|,f}^u \leq \varrho_{U,w}^u;\; \Omega_{U,w,|\mathcal{C}|,f}^u \leq \phi_{|\mathcal{C}|,f};\;\Omega_{U,w,|\mathcal{C}|,f}^u \geq \varrho_{U,w}^u+\phi_{|\mathcal{C}|,f}-1, \tag{P2.i} \\
		\label{c22}&  \Omega_{U,w,|\mathcal{C}|,f}^{u,x} \leq \varrho_{U,w}^u;\; \Omega_{U,w,|\mathcal{C}|,f}^{u,x} \leq \phi_{|\mathcal{C}|,f}^x;\;\Omega_{U,w,|\mathcal{C}|,f}^{u,x} \geq \varrho_{U,w}^u+\phi_{|\mathcal{C}|,f}^x-1,   \tag{P2.j}\\
	\label{c23}& \varrho_{u',w}^u \leq \varrho_{u'-1,w}^u;\; \varrho_{u',w}^u \leq \delta_{u',w}^u;\;\varrho_{u',w}^u \geq \varrho_{u'-1,w}^u+\delta_{u',w}^u-1,  \tag{P2.k}\\
	\label{c24}& \phi_{i,f}\leq \phi_{i-1,f};\; \phi_{i,f}\leq 1-c_{i,f};\; \phi_{i,f}\geq \phi_{i-1,f}-c_{i,f}, \tag{P2.l}\\
	\label{c25}& \phi_{i,f}^x\leq \phi_{i-1,f}^x;\; \phi_{i,f}^x\leq \bar{c}_{i,f}^x;\; \phi_{i,f}^x\geq \phi_{i-1,f}^x+\bar{c}_{i,f}^x-1, \tag{P2.m} \\
	\label{c26}& \Lambda_{u,u',w,|\mathcal{C}|,f,|\mathcal{C}|,f'}^y\leq \gamma_{u,u',w,|\mathcal{C}|,f};\; \Lambda_{u,u',w,|\mathcal{C}|,f,|\mathcal{C}|,f'}^y\leq \phi_{|\mathcal{C}|,f'}^y; \nonumber \\
	& \Lambda_{u,u',w,|\mathcal{C}|,f,|\mathcal{C}|,f'}^y\geq \gamma_{u,u',w,|\mathcal{C}|,f}+\phi_{|\mathcal{C}|,f'}^y-1, \tag{P2.n}  \\
			\label{c27}& \Lambda_{u,u',w,|\mathcal{C}|,f,|\mathcal{C}|,f'}^{x,y}\leq \gamma_{u,u',w,|\mathcal{C}|,f}^x;\; \Lambda_{u,u',w,|\mathcal{C}|,f,|\mathcal{C}|,f'}^{x,y}\leq \phi_{|\mathcal{C}|,f'}^y; \nonumber \\
	& \Lambda_{u,u',w,|\mathcal{C}|,f,|\mathcal{C}|,f'}^{x,y}\geq \gamma_{u,u',w,|\mathcal{C}|,f}^x+\phi_{|\mathcal{C}|,f'}^y-1, \tag{P2.o}  \\
	\label{c28}& \gamma_{u,u',w,|\mathcal{C}|,f}\leq r_{u,u',w};\;\gamma_{u,u',w,|\mathcal{C}|,f}\leq \phi_{|\mathcal{C}|,f'};\; \gamma_{u,u',w,|\mathcal{C}|,f}\geq r_{u,u',w}+\phi_{|\mathcal{C}|,f}-1, \tag{P2.p} \\
	\label{c29}& \gamma_{u,u',w,|\mathcal{C}|,f}^x\leq r_{u,u',w};\;\gamma_{u,u',w,|\mathcal{C}|,f}^x\leq \phi_{|\mathcal{C}|,f}^x;\; \gamma_{u,u',w,|\mathcal{C}|,f}^x\geq r_{u,u',w}+\phi_{|\mathcal{C}|,f}^x-1, \tag{P2.q} \\
		\label{c30}& r_{u,u',w}\leq r_{u,w};\;r_{u,u',w}\leq r_{u',w};\; r_{u,u',w}\geq r_{u,w}+r_{u',w}-1, \tag{P2.r}
	\end{align}
\end{subequations}
where in (\ref{c1})
\begin{eqnarray}
\label{eq:Guf_lin}
G_{u,f}&=& \bar{T}_0 \phi_{|\mathcal{C}|,f} + \sum_{w=1}^{W} \Omega_{U,w,|\mathcal{C}|,f}^u G_{mu}(\emptyset)+\sum_{w=1}^{W} \sum_{x \in \mathcal{C}\backslash\{u\}} \Omega_{U,w,|\mathcal{C}|,f}^{u,x} G_{xu}(\emptyset) \nonumber \\
&+& \sum_{w=1}^W \sum_{\substack{u'=1\\u' \neq u}}^U \sum_{k'=1}^{K} \sum_{\substack{f'=1;\\f' \neq f}}^F p_{u'}^{k'} q_{f'}^{k'} \sum_{\substack{y \in \mathcal{C}\\ \backslash\{u,u',m\}}} \Lambda_{u,u',w,|\mathcal{C}|,f,|\mathcal{C}|,f'}^y \nonumber \\
&+& \sum_{w=1}^W \sum_{\substack{u'=1\\u' \neq u}}^U \sum_{k'=1}^{K} \sum_{\substack{f'=1;\\f' \neq f}}^F p_{u'}^{k'} q_{f'}^{k'} \sum_{\substack{x \in \mathcal{C}\\ \backslash\{u,u'\}}} \sum_{\substack{y \in \mathcal{C}\\ \backslash\{u,u',x\}}} \Lambda_{u,u',w,|\mathcal{C}|,f,|\mathcal{C}|,f'}^{x,y},
\end{eqnarray}
and in (\ref{c21} - \ref{c30})
\begin{itemize}
	\item $\Omega_{U,w,|\mathcal{C}|,f}^u=\varrho_{U,w}^u \; \phi_{|\mathcal{C}|,f}$ and $\Omega_{U,w,|\mathcal{C}|,f}^{u,x}=\varrho_{U,w}^u\;  \phi_{|\mathcal{C}|,f}^x$,
	\item $\Lambda_{u,u',w,|\mathcal{C}|,f,|\mathcal{C}|,f'}^y=\gamma_{u,u',w,|\mathcal{C}|,f} \; \phi_{|\mathcal{C}|,f'}^y$ and $\Lambda_{u,u',w,|\mathcal{C}|,f,|\mathcal{C}|,f'}^{x,y}=\gamma_{u,u',w,|\mathcal{C}|,f}^x \; \phi_{|\mathcal{C}|,f'}^y$,
	\item $\varrho_{1,w}^u=\delta_{1,w}^u$ and $\varrho_{u',w}^u=\varrho_{u'-1,w}^u\delta_{u',w}^u$, $\forall u' \in \mathcal{U}\backslash\{1\}$ with $\delta_{u,w}^u=r_{u,w}$ and $\delta_{u',w}^u=1-r_{u',w}$, $\forall u' \in \mathcal{U}\backslash\{u\}$,
	\item $\phi_{1,f}=1-c_{1,f}$ and $\phi_{i,f}=\phi_{i-1,f} \left(1-c_{i,f}\right)$, $\forall i \in \mathcal{C}\backslash\{1\}$, 
	\item $\phi_{1,f}^x=\bar{c}_{1,f}^x$ and $\phi_{i,f}^x=\phi_{i-1,f}^x\; \bar{c}_{i,f}^x$, $\forall i \in \mathcal{C}\backslash\{1\}$, with $\bar{c}_{x,f}^x=c_{x,f}$ and $\bar{c}_{i,f}^x=1-c_{i,f}$, $\forall i \in \mathcal{C} \backslash\{x\}$,
	\item $\gamma_{u,u',w,|\mathcal{C}|,f}=r_{u,u',w} \;\phi_{|\mathcal{C}|,f}$ and $\gamma_{u,u',w,|\mathcal{C}|,f}^x=r_{u,u',w} \;\phi_{|\mathcal{C}|,f}^x$, with $r_{u,u',w}=r_{u,w}\;  r_{u'w}$.
\end{itemize}
Problem (P2) is now an ILP. We  model it using the AMPL language \cite{AMPL}, and then we solve it using IBM CPLEX Optimizer \cite{Cplex}.   

In order to quantify the complexity of problem (P2), we determine the total number of variables implicated in the optimization. For the variables in $\mathbf{X}$, the expected number of nonzero $x_{u,f}$ for each file $f$ can be defined by $N$. In (\ref{eq:Guf_lin}), the number of variables is dominant in the last term. Indeed, we can rewrite
\begin{equation}
\label{eq:Lambda_xy}
\Lambda_{u,u',w,|\mathcal{C}|,f,|\mathcal{C}|,f'}^{x,y}=r_{u,w}\; r_{u',w}\; \phi_{|\mathcal{C}|,f}^x  \; \phi_{|\mathcal{C}|,f'}^y, \; \forall u \in \mathcal{U}, u'\in \mathcal{U}\backslash\{u\},w \in \mathcal{W},f \in \mathcal{F},f' \in \mathcal{F}\backslash\{f\}.
\end{equation}     
Since the maximum number of nonzero $r_{u,w}$ and $r_{u',w}$ is limited by $R$, $\forall w \in \mathcal{W}$, and that of $\phi_{|\mathcal{C}|,f}^x$ (resp. $\phi_{|\mathcal{C}|,f'}^y$) is bounded by 1, $\forall f \in \mathcal{F}$ and $\forall x \in \mathcal{C}\backslash\{u,u'\}$ (resp. $\forall f' \in \mathcal{F}\backslash\{f\}$ and $\forall y \in \mathcal{C}\backslash\{u,u',x\}$), then the total number of variables in the optimization problem (P2) is of the order of
\begin{eqnarray}
\label{eq:complexity}
&&\mathcal{O}\left( \sum_{f=1}^F 2^{N} \times \sum_{w=1}^W \sum_{u=1}^U    \sum_{\substack{u'=1\\u' \neq u}}^U \sum_{f=1}^F \sum_{\substack{f'=1;\\f' \neq f}}^F  \sum_{\substack{x \in \mathcal{C}\\ \backslash\{u,u'\}}} \sum_{\substack{y \in \mathcal{C}\\ \backslash\{u,u',x\}}} 2^{2 \left(R +1\right)}\right) \nonumber\\
&=&\mathcal{O}\left( W U (U-1) F^2 (F-1) (|\mathcal{C}|-2)(|\mathcal{C}|-3) 2^{N+2 R +2} \right).
\end{eqnarray}

\subsection{Heuristic Solution Design}
Due to the high time complexity to obtain the optimal solution, we propose the following heuristic resource allocation and caching algorithm. We start by the resource (i.e. channel) allocation process, where channels are allocated in regards to minimize interference. Then, the caching process is performed, aiming at maximizing the number of popularity-wise user-file pairs respecting given constraints. 

\subsubsection{Resource Allocation}
Allocating channels to users depends on the available number of channels and on the number of users. Indeed, if $W \geq U$, it is optimal to dedicate one channel for every user. However, if $W < U$, channel allocation has to be adequately designed in order to minimize interference between simultaneous transmissions. 
 
To do so, we define the following parameters: let $X=\text{div}(U,W)$ and $R'=\text{mod}(U,W)$ be the quotient and remainder of the Euclidean division of $U$ by $W$ respectively. In order for the channels to be fairly used, we impose that for $W-R'$ channels, each channel has to be shared by $X$ users, while for the remaining $R'$ channels, each one is shared by $X+1$ users. Finally, we set $R=X+1=\lceil{\frac{U}{W}}\rceil$. 

Let $\mathcal{A}_{i}=\left\{ \mathcal{P}_1^i, \ldots, \mathcal{P}_{R'}^i, \bar{\mathcal{P}}_1^i, \ldots, \bar{\mathcal{P}}_{W-R'}^i \right\}$ be a set of $W$ polygons, such that each polygon $\mathcal{P}_j^i$ ($j \in \{1,\ldots, R'\}$) is a set of $X+1$ vertices (users) and each polygon $\bar{\mathcal{P}}_j^i$ ($j \in \{1,\ldots, W-R'\}$) is a set of $X$ vertices, where $\left\{\cup _{j=1}^{R'}\mathcal{P}_j^i \right\} \cup \left\{ \cup  _{j'=1}^{W-R'}\bar{\mathcal{P}}_j^i \right\} = \mathcal{U}$, and all polygons of $\mathcal{A}_i$ are distinct. 
 For each set $\mathcal{A}_i$, we calculate the perimeter $p_j^i$ of each polygon, the average perimeter $\bar{p}_i=\frac{1}{W}\sum_{j=1}^{W}p_j^i$ and the associated variance $\sigma_i^2=\frac{1}{W}\sum_{j=1}^{W}  \left( p_j^i-\bar{p}_i \right)^2$.
By defining the metric $\nu_i=\frac{\bar{p}_i}{\sigma_i^2}$, we allocate the $W$ channels to the users of polygons in set $\mathcal{A}_{i_0}=\text{arg}\max \limits_{i} \nu_i$. The key behind using the metric $\nu_i$ is that we aim at defining similar polygons of widely spaced users. The channel allocation procedure is summarized in Algorithm \ref{Algo0} below.

\begin{algorithm}
	\caption{Channel Allocation Algorithm}
	\label{Algo0}
	\begin{algorithmic}[1]
		\State Set $\mathbf{R}=\mathbf{0}_{(U \times W)}$ \% Channel Allocation Matrix
		\State Set $X=\text{div}(U,W)$; set $R'=\text{mod}(U,W)$ 
		\State Construct all sets $\mathcal{A}_i$ of $W$ polygons as described above
		\State Calculate $\bar{p}_i$, $\sigma_i^2$, and $\nu_i$, $\forall i$
		\State Find $\mathcal{A}_{i_0}=\text{arg}\max \limits_{i} \nu_i$
		\State Assign the $W$ channels to users in polygons of $\mathcal{A}_{i_0}$, by setting their binary variables in $\mathbf{R}$ to 1s
		\State Return $\mathbf{R}$
	\end{algorithmic}
\end{algorithm}

\subsubsection{Caching Placement}   
Conventional caching strategies would rely on popularity of files, where most popular files are cached closest to users. However, in our system a non-redundancy caching limit is imposed where a file cannot be cached more than once within the network. Consequently, the proposed caching strategy will place the files based in part on $Q_{u,f}=\sum _{k=1}^K p_u^k q_f^k$, the averaged file $f$ popularity for user $u$, with respect to the $K$ classes.   

For a given channel allocation, problem P1 can be decoupled into multiple cache placement problems (P$_f$) for $F$ files, such that:
\begin{subequations}
	\begin{align}
	\max_{\mathbf{X},\mathbf{C}} & \quad 
	O_f=\frac{1}{U} \sum_{u=1}^U Q_{u,f} x_{u,f}   \nonumber \\
	\label{c1_1} 
	\text{s.t.}\quad & (\ref{c1})-(\ref{c3}), (\ref{c6})-(\ref{c7}),
	\end{align}
\end{subequations}
We define by $Q_f=\sum_{u=1}^U Q_{u,f}$ the popularity metric of file $f$. By ranking $Q_f$ from the largest to the smallest value in a set $\mathcal{Q}$, we can proceed to an iterative file placement approach as follows: (1) Starting from the first file in $\mathcal{Q}$, we find its best location within the network that achieves the maximum $O_f$ value, with respect to constraints. (2) We update the caching memories. (3) We move to the next file in $\mathcal{Q}$ and repeat the first two steps (1)-(2). (4) We repeat step (3) until all files are processed. This approach is summarized in Algorithm \ref{Algo00} below.

\begin{algorithm}
	\caption{Caching Policy Algorithm}
	\label{Algo00}
\begin{algorithmic}[1]
	\State Set $\mathbf{C}=[c_{i,f}]=\mathbf{0}_{(|\mathcal{C}| \times F)}$ \% Caching Matrix
	\State $\forall f \in \mathcal{F}$, calculate $Q_{f}$
	\State Rank $Q_{f}$ from the largest to the smallest value in set $\mathcal{Q}$
	\For {$Q_{f'} \in \mathcal{Q}$}
	\State Set $\mathcal{C}'=\mathcal{C}$
	\State Calculate $O_{f'}$ when $f'$ is placed in node $i$, $\forall i \in \mathcal{C}'$
	\State Find $i_0=$arg$\max \limits_{i \in \mathcal{C'}} O_{f'}$ with respect to (\ref{c1})-(\ref{c3}), (\ref{c6})-(\ref{c7})
	\If{$\sum_{f=1}^{F}c_{i_0,f}<C'_i$}
	\State Set $c_{i_0,f'}=1$
	\Else
	\State $\mathcal{C}'=\mathcal{C}'\backslash\{i_0\}$
	\State Return to step 6
	\EndIf
	\EndFor
	\State Return $\mathbf{C}$
\end{algorithmic}
\end{algorithm}

\section{Numerical Results}
In this section, numerical results emphasize the performances of the proposed algorithms. Simulations assume that the system is composed of one MBS that occupies the center of a circular region, with radius equal to 100 meters, four uniformly located SBSs within $10\sqrt{50}\approx 71$ meters from the MBS, and $U \in \{10, 22\}$ randomly located users within the MBS's region. When not indicated, the following parameters are assumed fixed along the section: 
\begin{enumerate}
    \item Channel parameters: we fix the number of channels to $W=3$, with a bandwidth $B=1$ MHz for each channel. We assume that the path-loss exponent is set to $\alpha=3$, the time slot duration is $\tau=0.01$ sec, the average backhaul delay $\bar{T}_{0}=10$ TS (i.e. $0.1$ sec) and the transmission's success threshold $D_{th}=5$ TS.
    \item Caching parameters: We assume that the library has $F=100$ files, each with length $L=100$ bits. Moreover, the MBS, SBS and UE caching capacities are 500, 200 and 100 bits respectively. The Zipf distribution popularity exponent is set to $\beta=2$. Users can belong to one of the $K=3$ classes of interest, where the user $u$'s vector of probabilities to belong to a given class is defined as:
 \begin{equation}
    \label{eq:Pcla}
       \left[ p_u^1, p_u^2, p_u^3 \right]=\left\{
    \begin{array}{ll}
        \left[ 0.3, 0.5, 0.2 \right], & \mbox{if } \text{mod}(u,K)=0, \\
        \left[ 0.2, 0.3, 0.5 \right], & \mbox{if } \text{mod}(u,K)=1,  \\
        \left[ 0.5, 0.2, 0.3 \right], & \mbox{if } \text{mod}(u,K)=2, 
    \end{array}
\right.
\end{equation}   
\item Communication parameters: The transmit powers of MBS, SBSs and UEs in a given channel $w$ are set to $P_m=1$ Watt, $P_S=0.5$ Watts, and $P_U=0.1$ Watts respectively, while the noise power is fixed to $0.01$ Watts.
\end{enumerate}

\begin{figure}
	\begin{tabular}{cc}
		\centering
		\includegraphics[width=240pt,height=200pt]{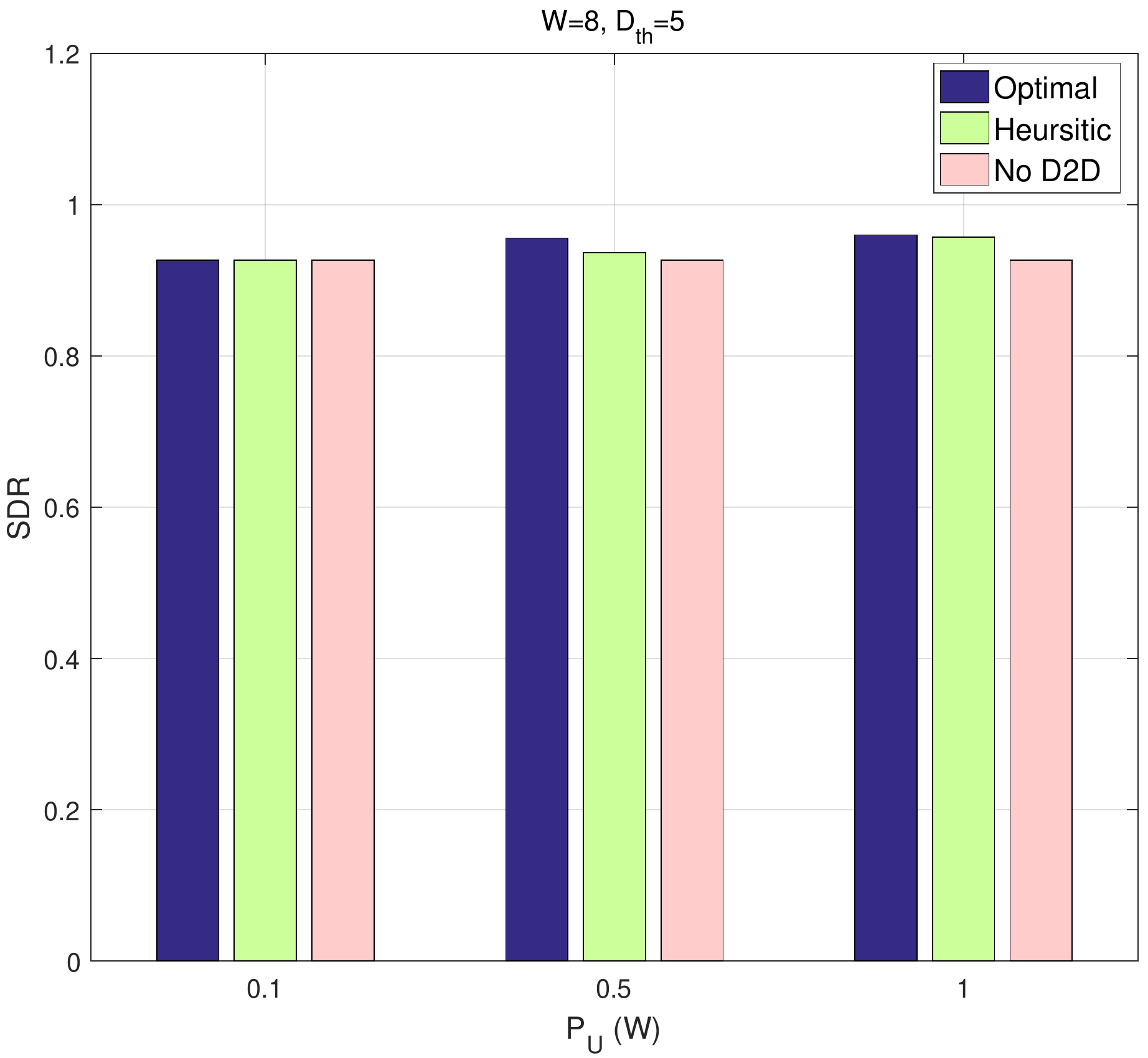} &   \includegraphics[width=240pt,height=200pt]{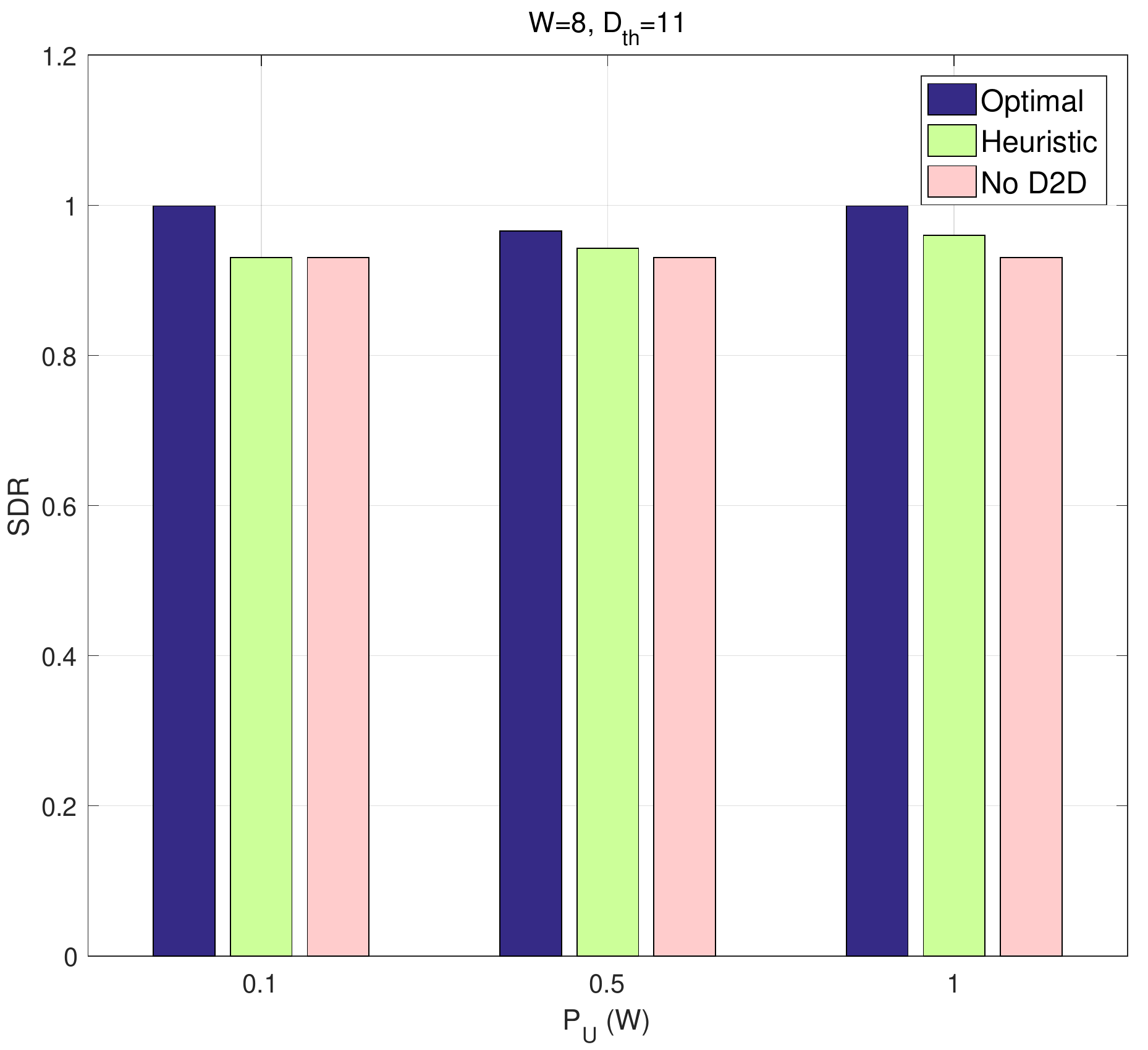} \\
		(a) SDR vs. $P_U$ ($W=8$ and $D_{th}=5$) & (b) SDR vs. $P_U$ ($W=8$ and $D_{th}=11$) \\[6pt]
		\includegraphics[width=240pt,height=200pt]{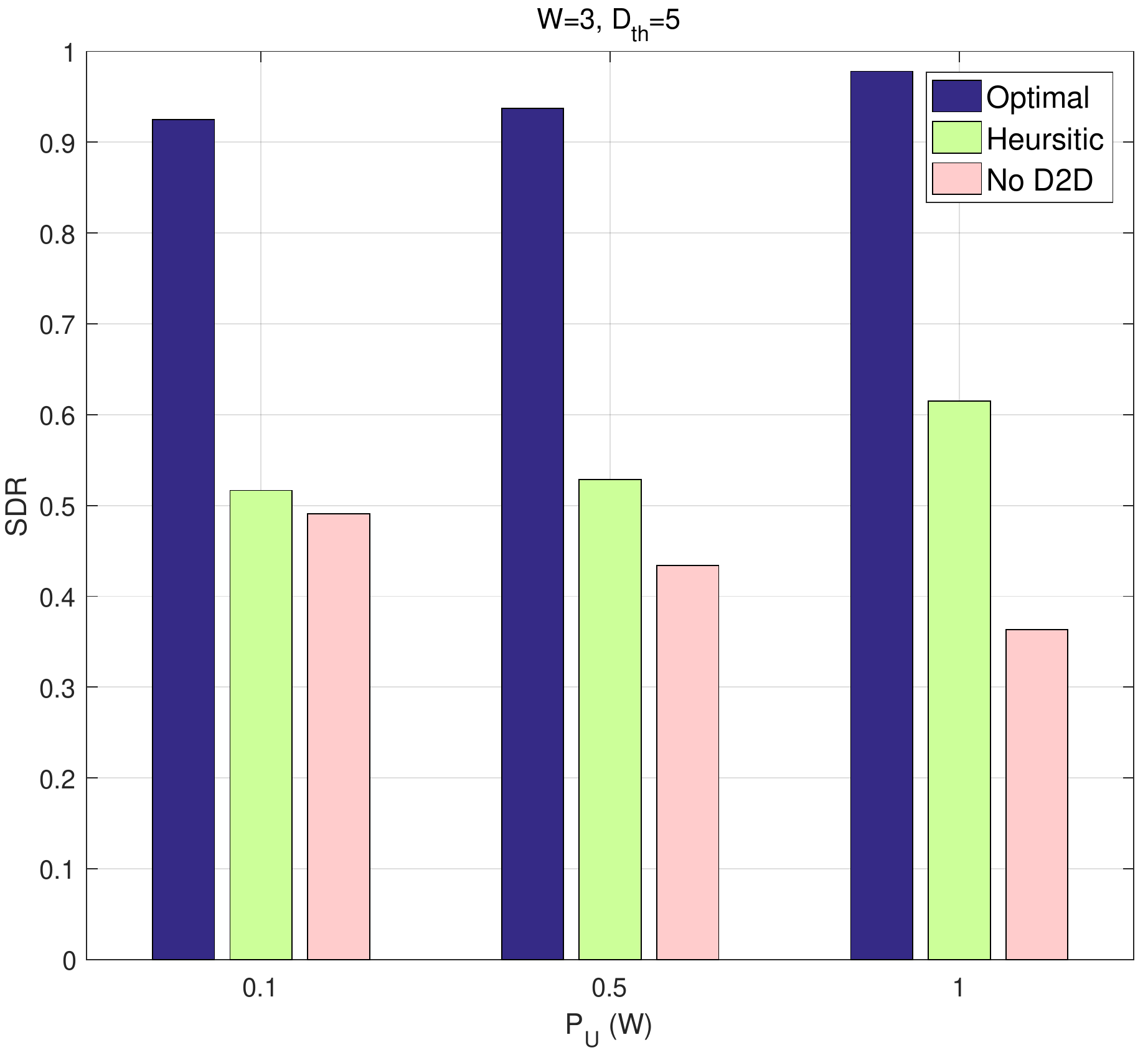} &   \includegraphics[width=240pt,height=200pt]{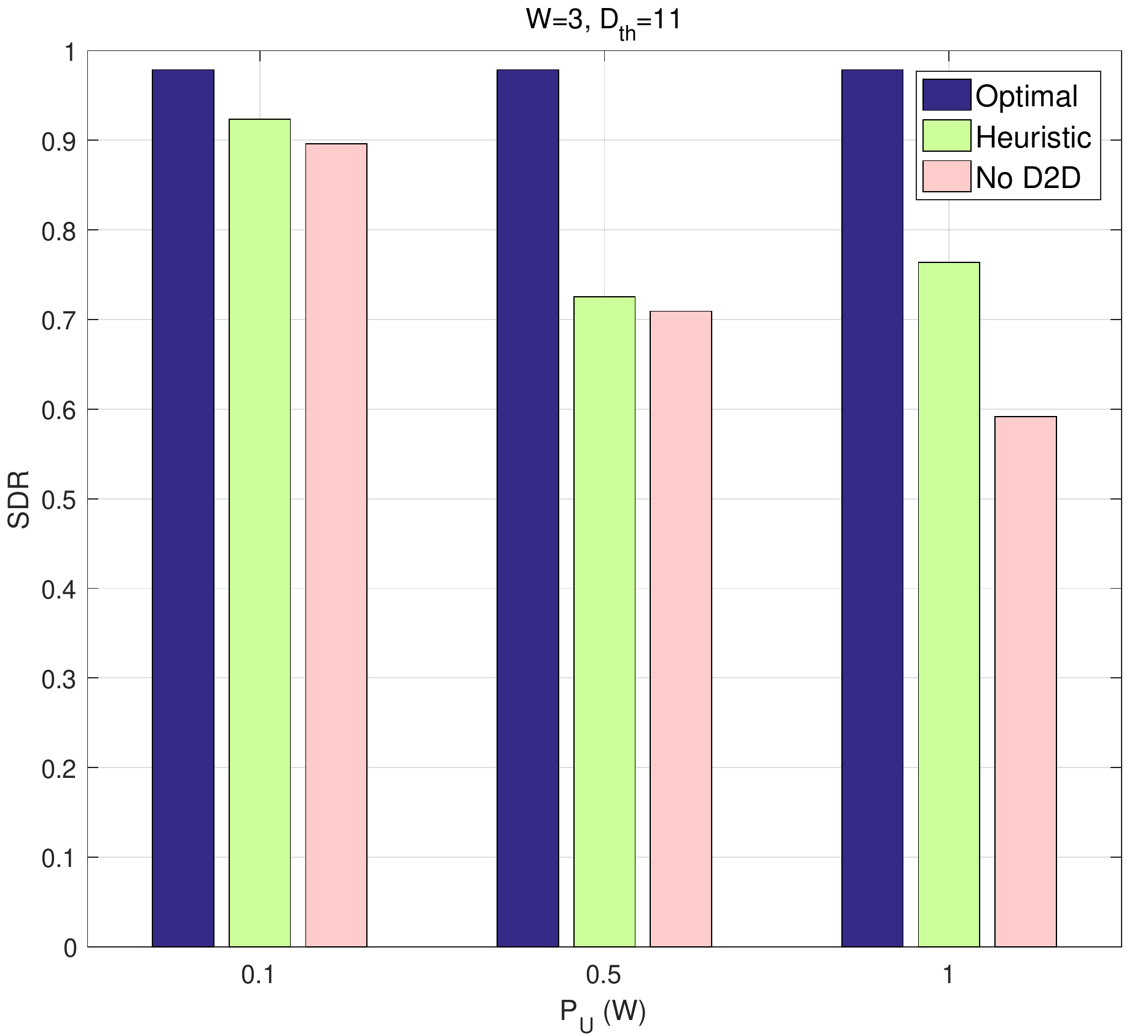} \\
		(c) SDR vs. $P_U$ ($W=3$ and $D_{th}=5$) & (d) SDR vs. $P_U$ ($W=3$ and $D_{th}=11$) \\[6pt]
	\end{tabular}
	\caption{Comparison of ``Optimal'', ``Heuristic'' and ``No D2D'' for different $P_U$ values}
	\label{fig:1}
\end{figure}

In Fig. \ref{fig:1} we present the SDR performance as a function of $P_U$ for the optimal solution (``Optimal''), the proposed heuristic (``Heuristic''), and the conventional Heterogeneous network (without D2D) (``No D2D'') \cite{Yang2018}. 
The use-case is composed of a small macro-cell with only $U=8$ users and $F=10$ files. The choice of such small parameters is due to the combination blast problem encountered for larger parameters' values. 
Fig. \ref{fig:1} compares these approaches for different $D_{th}$ and $W$ values. The optimal solution achieves performances that are close to 1 in most cases. Next, ``Heuristic'' achieves higher performances than ``No D2D''. For $W=8$ (Figs. \ref{fig:1}(a)-\ref{fig:1}(b)), communications are orthogonal, and hence, as $P_U$ increases, SDR for ``Heuristic'' improves. Indeed, with higher power, users become interesting candidates to cache popular files. However, SDR for ``No D2D'' is stable since it does not leverage caching at users. In fact, files are stored either in the core network or at the BSs. 

\begin{table}
	\centering
		\caption{Execution times of algorithms ($P_U=0.1$ W)}
	\label{table:1}
	\begin{tabular}{ |c|c|c|c|c| } 
		\hline
		{Algorithm} & $W=8$, $T_{th}=5$ & $W=8$, $T_{th}=11$ & $W=3$, $T_{th}=5$ & $W=3$, $T_{th}=11$ \\ 
		\hline \hline
		Optimal &483.59 sec & 518.38 sec & 74238 sec & 1025.93 sec \\ 
		\hline 
		Heuristic &0.00629 sec & 0.0063 sec & 0.00562 sec & 0.00664 sec \\
		\hline 
		No D2D &0.00266 sec & 0.00255 sec & 0.024 sec & 0.0105 sec \\
		\hline
	\end{tabular}
\end{table}

For $W=3$ (Figs. \ref{fig:1}(c)-\ref{fig:1}(d)), due to interference among communications, SDR degrades for ``Heuristic'' and ``No D2D''. This degradation is significant for stringent QoS ($D_{th}=5$). Indeed, this case demonstrates the limits of the proposed heuristic approach, as it leads to sub-optimal performances. Nevertheless, the complexity is dramatically reduced compared to ``Optimal''. This can be seen in Table \ref{table:1} where the execution time of ``Heuristic'' is in the order of milliseconds, while ``Optimal'' needs hundreds-thousands seconds to achieve its performance.   
Finally, in Fig. \ref{fig:1}(d), we notice that as $P_U$ increases, SDR is worst for $P_U=0.5$. Indeed, when $P_U$ is low, D2D communications are favored and they do not affect significantly simultaneous transmissions in the same frequency band. However, when $P_U$ reaches 0.5, interference becomes more important and inevitable when delivering files. Consequently, SDR degrades. However, as $P_U$ increases to 1, the approach results in a better caching strategy with less interfering D2D communications and thus improves SDR. According to this result, power control is required when conceiving D2D-assisted HetNets. The design of power control mechanisms is left for future works.

\begin{figure}
	\centering
	\includegraphics[width=350pt]{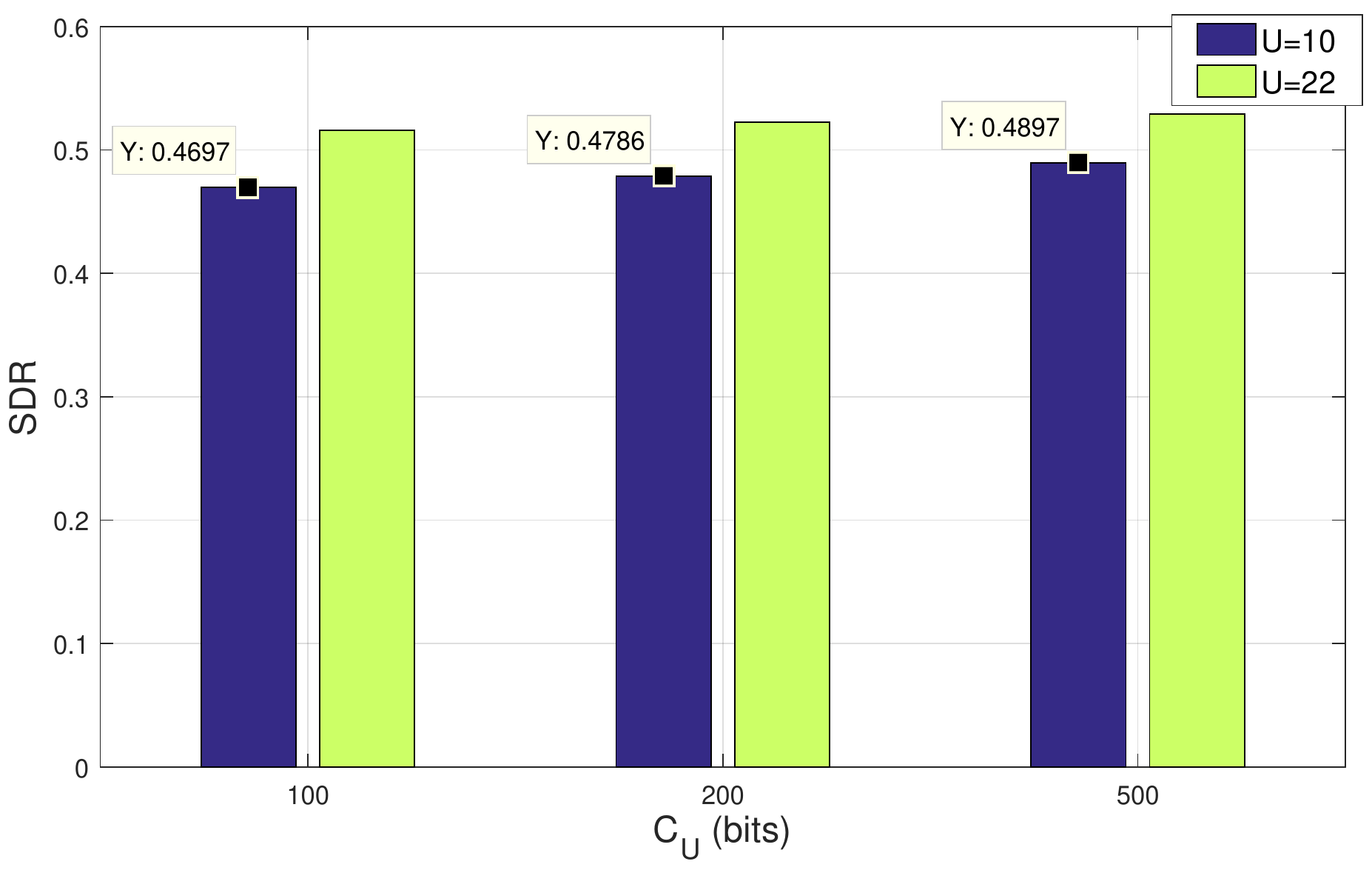}
	\caption{Influence of number of users $U$ and caching capacity $C_U$ on SDR}
	\label{Fig03}
\end{figure}

In order to evaluate larger sizes of the system, the remaining of this section shows only the performance of the ``Heuristic" algorithm in an interfered environment. We assume in what follows that $U=22$ users and $W=3$ channels. 

Fig. \ref{Fig03} illustrates SDR as a function of $C_U$, for $U=10$ and $22$ users respectively. When the caching capacity of users $C_U$ increases, the SDR performance enhances. Indeed, higher $C_U$ favors more caching in users and D2D low-interfered communications among them. Moreover, as $U$ grows, SDR significantly improves. This is due to a better caching policy that places files within best located users, compared to the case with a lower number of users. Hence, incentivizing users to participate, by caching and delivering files to other users, is advantageous in terms of SDR. 

\begin{figure}
	\centering
	\includegraphics[width=350pt]{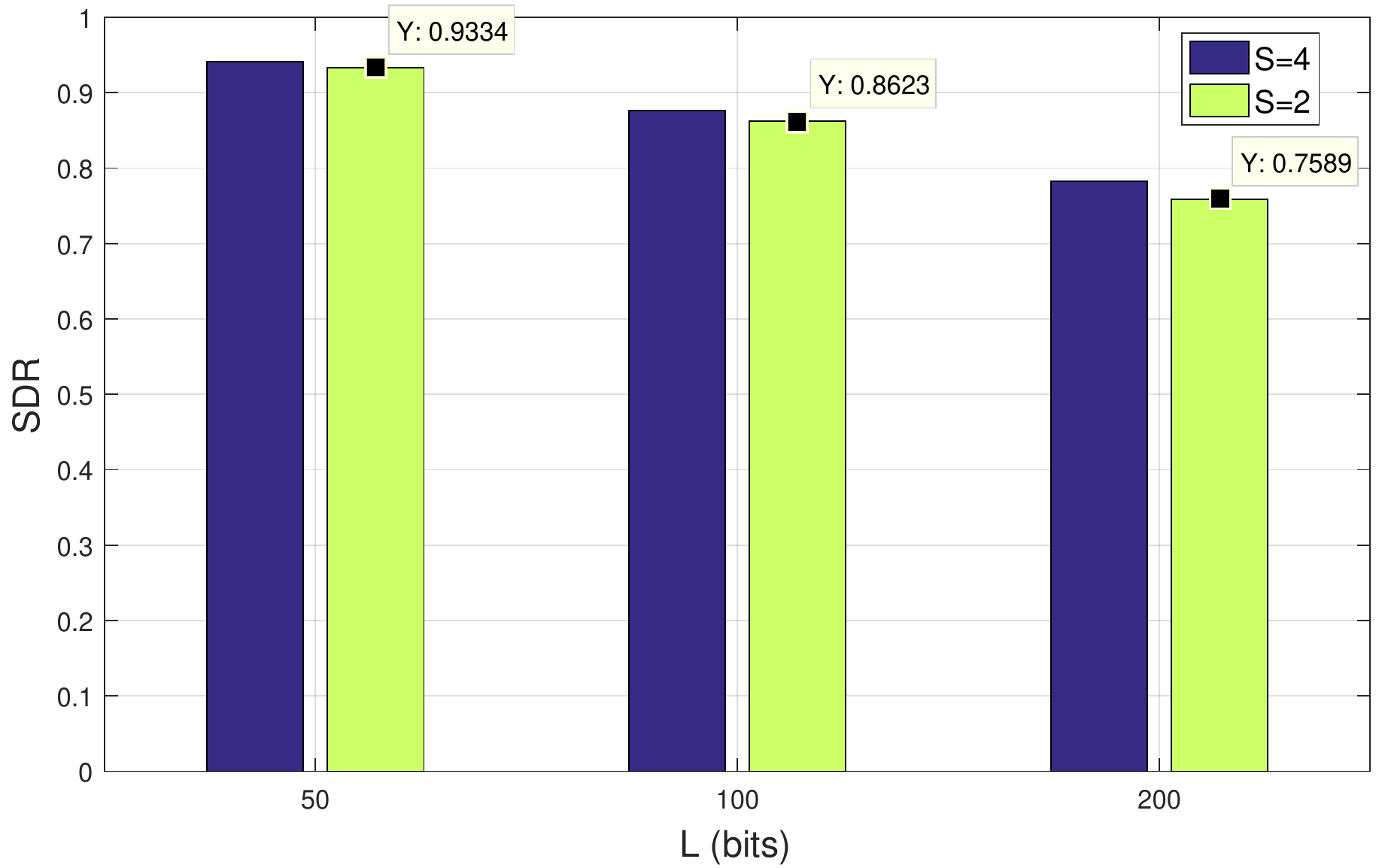}
	\caption{Influence of length of files $L$ and number of SBSs $S$ on SDR}
	\label{Fig04}
\end{figure}

Fig. \ref{Fig04} investigates the impact of the file length $L$ on SDR for $S=2$ and $4$ SBSs respectively. With larger files (or segments), caching within the network saturates rapidly and most files or segments have to be delivered via the backhaul link. Thus, due to a poor backhaul link and/or to a stringent delay threshold, SDR significantly degrades. In addition, SDR slightly decreases with the number of SBSs. Indeed, a small $S$ means that less caching capacity and reliable wireless links are available to deliver files to users within the macro-cell. Consequently, network densifying provides SDR gain if the additional interference does not affect content reception.

\begin{figure}
	\centering
	\includegraphics[width=350pt]{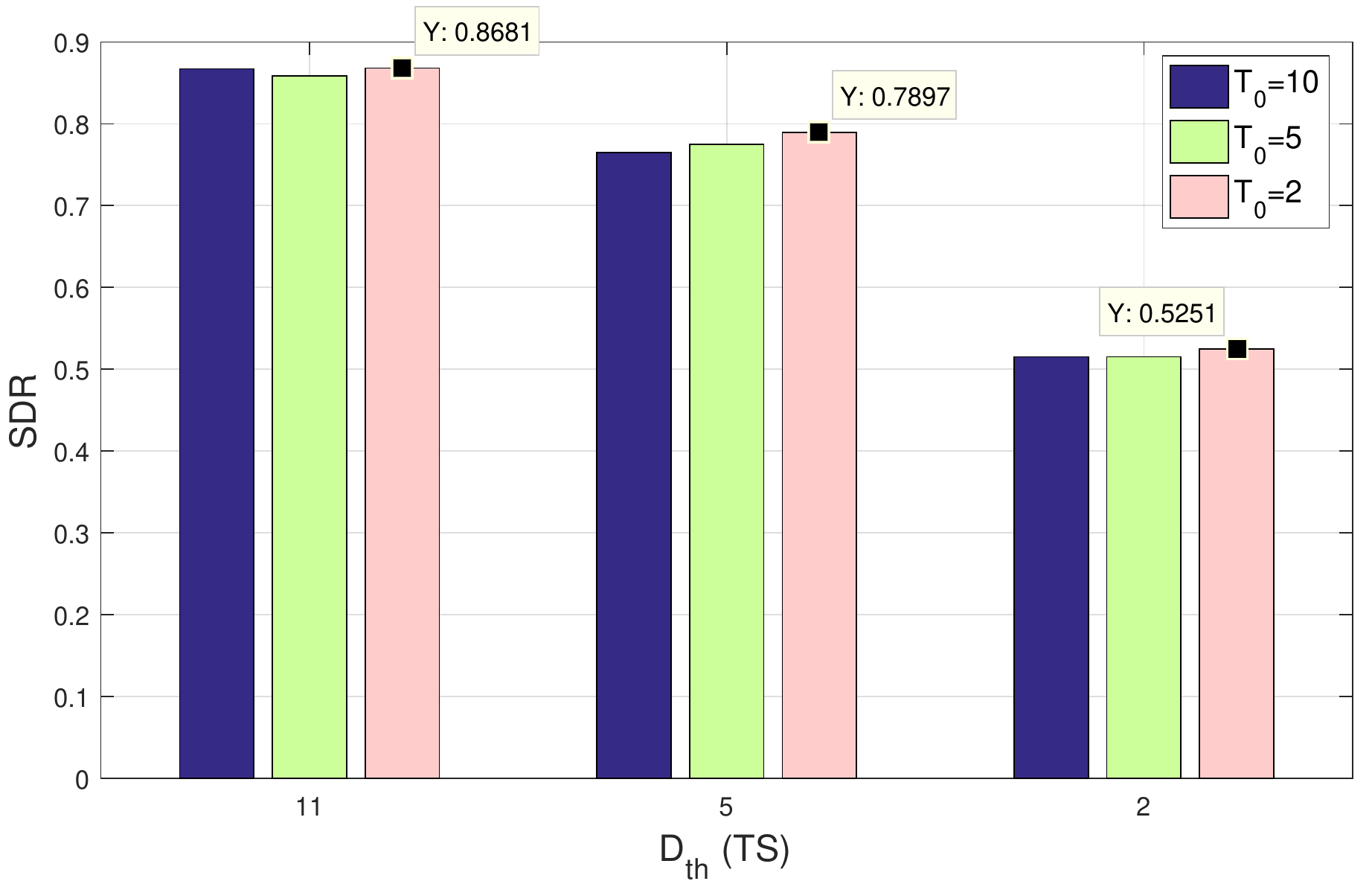}
	\caption{Influence of delay threshold $D_{th}$ and backhaul delay $\bar{T}_0$ on SDR}
	\label{Fig05}
\end{figure}

In Fig. \ref{Fig05}, the impact of the threshold $D_{th}$ on SDR is studied for $\bar{T}_{0} \in \{11\; \text{TS},5\; \text{TS},2\; \text{TS}\}$ respectively. On one hand, when $D_{th}$ becomes more stringent, SDR performances decreases. This is expected since a stricter delay threshold would inevitably penalize more files' deliveries. On the other hand, a better backhaul link is expected to improve SDR. However, as it can be seen for $D_{th}=11$, it is not always the case. Indeed, even with a better backhaul link, the caching policy may not ask the core network for most popular files that account for the larger part of users' requests. As a consequence, the backhaul's contribution in SDR gain is either low (for small $D_{th}$) or insignificant (for large $D_{th}$).  


\section{Conclusion}
In this paper, we studied joint caching and resource allocation for D2D-assisted wireless HetNets. We started by deriving an upper-bound of the average content delivery delay. The latter is used in formulating the maximization problem of SDR performance under communication and caching constraints. The joint caching and channel allocation problem is identified as non-linear and non-convex. To solve it optimally, we propose linearizion transformation into an ILP problem. Due to its complexity, the latter is solved optimally for small scenarios using CPLEX. Then, a low-complex two-step heuristic solution is proposed. In the first step, the channel allocation strategy is defined, while in the second step, caching policy is determined. Through numerical results, we have shown that the proposed heuristic algorithm achieves 60\%-90\% of the optimal solution's performance, with a very small execution time. Besides, it outperforms conventional caching approaches. Finally, the impact of key parameters is investigated. The obtained results provide the following design guidelines: 1) Incentivizing users to participate in the caching process, by making available large caching capacities, may be beneficial to both the system and users. 2) Densifying the network to some extent may bring some gain. However, this is true as long as the generated additional interference of new SBSs does not degrade significantly communications among the network's nodes. 3) A better backhaul link is generally advantageous when stringent QoS requirements are setup. However, for loosened QoS requirements, the system may decrease its backhaul use by caching more files within the system's nodes. As a future work, the system may be extended to heterogeneous QoS requirements, caching capacities, transmit powers and supported number of channels.  

\appendices
\section{Proof of Theorem \ref{Theorem0}}
\label{FirstAppendix}
By applying the Chernoff bound to (\ref{eq:ProbT_ij}) \cite{mitzenmacher_upfal_2005}, and assuming independence between random variables (RV) $Y_n$, $\forall n=1,\ldots,T$, we obtain:
\begin{equation}
\label{eq:develop1}
\mathop{\mathbb{P}}\left[ Z_T< \frac{L}{\tau B}\right]\leq \min_{t>0} e^{\frac{t L}{\tau B}} \prod \limits_{n=1}^T \mathop{\mathbb{E}}\left[e^{-t Y_n}\right].
\end{equation} 
Let $Z_n=e^{-t Y_n}$ and $X_n=\mathrm{SINR_{ij}(n)}$, $\forall n=1,\ldots,T$, then the cumulative distribution function (cdf) of $Z_n$ can be expressed by ($\forall z \in (0,1]$):
\begin{equation}
\label{eq:developp2}
F_{Z_n}(z)=\mathop{\mathbb{P}}\left[ e^{-t\; \mathrm{log}_2(1+X_n)}< z\right]=\mathop{\mathbb{P}}\left[ X_n\geq 2^{-\frac{\mathrm{ln}(z)}{t}-1}\right] =1-F_{X_n}\left(2^{-\frac{\mathrm{ln}(z)}{t}-1}\right)=e^{-\frac{2^{-\frac{\mathrm{ln}(z)}{t}-1}}{\vartheta_{ij}}},
\end{equation} 
where $F_{X_n}(x)=1-e^{-\frac{x}{\vartheta_{ij}}}$, $\forall x○ \geq 0$, is the cdf of RV $X_n$ in an interference-free environment. From (\ref{eq:developp2}), we derive the probability density function (pdf) of $Z_n$, given by:
\begin{equation}
\label{eq:develop3}
f_{Z_n}(z)=\frac{\partial F_{Z_n}(z)}{\partial z} =\frac{\mathrm{ln}(2) e^{-\frac{2^{-\frac{\mathrm{ln}(z)}{t}}-1}{\vartheta_{ij}}}}{z \; t \; \vartheta_{ij} 2^{\frac{\mathrm{ln}(z)}{t}}}, \; \forall \; 0<z\leq 1.
\end{equation}   
Consequently, the mean of $Z_n$ can be calculated as follows:
\begin{equation}
\label{eq:develop4}
\mathop{\mathbb{E}}\left[Z_n\right]=\int \limits_{0}^{1} z f_{Z_n}(z)dz= \frac{e^{\frac{1}{\vartheta_{ij}}} \Gamma\left(1-\frac{t}{\mathrm{ln}(2)},\frac{1}{\vartheta_{ij} }\right) }{\vartheta_{ij}^{\frac{t}{\mathrm{ln}(2)}}}.
\end{equation}
By combining (\ref{eq:develop4}) into (\ref{eq:develop1}), $\zeta_0(T)$ is obtained, given by (\ref{eq:Chernoff1}). 

In an interfered environment, the cdf of $X_n$ can be obtained from (Eq. A.4, \cite{Yang2018}) as:
\begin{equation}
\label{eq:develop5}
F_{X_n}(x)=1-\frac{\vartheta_{ij} e^{-\frac{x}{\vartheta_{ij}}}}{\prod \limits_{i'=1}^{|\mathcal{I}_j|} \left(\vartheta_{ij}+\vartheta_{i'j} \; x\right)}, \; \forall \; x \geq 0. 
\end{equation}
Hence, the cdf of $Z_n$ can be obtained similarly to (\ref{eq:developp2}), and its pdf is given as follows:
\begin{equation}
\label{eq:develop6}
f_{Z_n}^0(z)=\frac{\left(1+\vartheta_{ij} \sum \limits_{i''=1}^{|\mathcal{I}_j|} \frac{\vartheta_{i''j}}{\left(\vartheta_{ij}+ \vartheta_{i''j} \left(2^{-\frac{\mathrm{ln}(z)}{t}}-1\right)\right)}  \right)}{\prod \limits_{i'=1}^{|\mathcal{I}_j|} \left(\vartheta_{ij}+\vartheta_{i'j} \left(2^{-\frac{\mathrm{ln}(z)}{t}}-1\right) \right) } \cdot \frac{\mathrm{ln}(2) e^{-\frac{2^{-\frac{\mathrm{ln}(z)}{t}}-1}{\vartheta_{ij}}}}{t \; z \; 2^{\frac{\mathrm{ln}(z)}{t}}},\; \forall \; 0<z\leq 1. 
\end{equation}    
Due to the complexity of $Z_n$'s pdf, this expression cannot be used to calculate its mean. We propose to upper bound $f_{Z_n}^0$ by exploiting the following property: $0<z\leq 1$ implies that $\forall i'=1,\ldots,|\mathcal{I}_j|$, $\frac{1}{\vartheta_{ij}+\vartheta_{i'j} \left(2^{-\frac{\mathrm{ln}(z)}{t}}-1\right)}\leq \frac{1}{\vartheta_{ij}}$. Thus, $f_{Z_n}^0(z) \leq   f_{Z_n}(z)\cdot \frac{\left(1+\sum \limits_{i''=1}^{|\mathcal{I}_j|}\vartheta_{i''j}\right)}{\vartheta_{ij}^{|\mathcal{I}_j|-1}}$. By proceeding similarly as in (\ref{eq:develop4}), and combining the result with (\ref{eq:develop1}), $\zeta_1(T,\mathcal{I}_j)$ is obtained as presented in (\ref{eq:Chernoff2}).     
This completes the proof of Theorem \ref{Theorem0}.


\ifCLASSOPTIONcaptionsoff
  \newpage
\fi



%


\bibliographystyle{IEEEtran}
\bibliography{IEEEabrv,tau}

\begin{thebibliography}{10}
\providecommand{\url}[1]{#1}
\csname url@samestyle\endcsname
\providecommand{\newblock}{\relax}
\providecommand{\bibinfo}[2]{#2}
\providecommand{\BIBentrySTDinterwordspacing}{\spaceskip=0pt\relax}
\providecommand{\BIBentryALTinterwordstretchfactor}{4}
\providecommand{\BIBentryALTinterwordspacing}{\spaceskip=\fontdimen2\font plus
\BIBentryALTinterwordstretchfactor\fontdimen3\font minus
  \fontdimen4\font\relax}
\providecommand{\BIBforeignlanguage}[2]{{%
\expandafter\ifx\csname l@#1\endcsname\relax
\typeout{** WARNING: IEEEtran.bst: No hyphenation pattern has been}%
\typeout{** loaded for the language `#1'. Using the pattern for}%
\typeout{** the default language instead.}%
\else
\language=\csname l@#1\endcsname
\fi
#2}}
\providecommand{\BIBdecl}{\relax}
\BIBdecl

\bibitem{Wang2018}
C.~Wang, Y.~He, F.~R. Yu, Q.~Chen, and L.~Tang, ``{Integration of Networking,
  Caching, and Computing in Wireless Systems: A Survey, Some Research Issues,
  and Challenges},'' \emph{IEEE Commun. Surveys Tuto.}, vol.~20, no.~1, pp.
  7--38, $1^{st}$ quarter 2018.

\bibitem{Agiwal2016}
M.~Agiwal, A.~Roy, and N.~Saxena, ``{Next Generation 5G Wireless Networks: A
  Comprehensive Survey},'' \emph{IEEE Commun. Surveys Tuto.}, vol.~18, no.~3,
  pp. 1617--1655, $3^{rd}$ quarter 2016.

\bibitem{Cheng2016}
W.~Cheng, X.~Zhang, and H.~Zhang, ``{Statistical-QoS Driven Energy-Efficiency
  Optimization Over Green 5G Mobile Wireless Networks},'' \emph{IEEE J. Sel.
  Areas Commun.}, vol.~34, no.~12, pp. 3092--3107, Dec. 2016.

\bibitem{Ansari}
R.~I. Ansari, C.~Chrysostomou, S.~A. Hassan, M.~Guizani, S.~Mumtaz,
  J.~Rodriguez, and J.~J. P.~C. Rodrigues, ``{5G D2D Networks: Techniques,
  Challenges, and Future Prospects},'' \emph{IEEE Systems J.}, vol.~PP, no.~99,
  pp. 1--15, 2018.

\bibitem{Malan2015}
F.~Malandrino, Z.~Limani, C.~Casetti, and C.~Chiasserini, ``{Interference-Aware
  Downlink and Uplink Resource Allocation in HetNets With D2D Support},''
  \emph{IEEE Trans. Wireless Commun.}, vol.~14, no.~5, pp. 2729--2741, May
  2015.

\bibitem{Ali2016}
M.~Ali, S.~Qaisar, M.~Naeem, and S.~Mumtaz, ``{Energy Efficient Resource
  Allocation in D2D-Assisted Heterogeneous Networks with Relays},'' \emph{IEEE
  Access}, vol.~4, pp. 4902--4911, 2016.

\bibitem{Huang2016}
Y.~Huang, A.~A. Nasir, S.~Durrani, and X.~Zhou, ``{Mode Selection, Resource
  Allocation, and Power Control for D2D-Enabled Two-Tier Cellular Network},''
  \emph{IEEE Trans. Commun.}, vol.~64, no.~8, pp. 3534--3547, Aug. 2016.

\bibitem{Cao2017}
W.~Cao, G.~Feng, S.~Qin, and M.~Yan, ``{Cellular Offloading in Heterogeneous
  Mobile Networks With D2D Communication Assistance},'' \emph{IEEE Trans. Veh.
  Tech.}, vol.~66, no.~5, pp. 4245--4255, May 2017.

\bibitem{Tsiropoulos2017}
G.~I. Tsiropoulos, A.~Yadav, M.~Zeng, and O.~A. Dobre, ``{Cooperation in 5G
  HetNets: Advanced Spectrum Access and D2D Assisted Communications},''
  \emph{IEEE Wireless Commun.}, vol.~24, no.~5, pp. 110--117, Oct. 2017.

\bibitem{Naqvi2018}
S.~A.~R. Naqvi, H.~Pervaiz, S.~A. Hassan, L.~Musavian, Q.~Ni, M.~A. Imran,
  X.~Ge, and R.~Tafazolli, ``{Energy-Aware Radio Resource Management in
  D2D-Enabled Multi-Tier HetNets},'' \emph{IEEE Access}, vol.~6, pp.
  16\,610--16\,622, 2018.

\bibitem{Hao2018}
Y.~Hao, Q.~Ni, H.~Li, and S.~Hou, ``{Robust Multi-Objective Optimization for
  EE-SE Tradeoff in D2D Communications Underlaying Heterogeneous Networks},''
  \emph{IEEE Trans. Commun.}, vol.~66, no.~10, pp. 4936--4949, Oct. 2018.

\bibitem{Bastug2014}
E.~Bastug, M.~Bennis, and M.~Debbah, ``{Living on the Edge: The Role of
  Proactive Caching in 5G Wireless Networks},'' \emph{IEEE Commun. Mag.},
  vol.~52, no.~8, pp. 82--89, Aug 2014.

\bibitem{Li2018}
L.~Li, G.~Zhao, and R.~S. Blum, ``{A Survey of Caching Techniques in Cellular
  Networks: Research Issues and Challenges in Content Placement and Delivery
  Strategies},'' \emph{IEEE Commun. Surveys Tuto.}, vol.~20, no.~3, pp.
  1710--1732, 2018.

\bibitem{Peng2015}
X.~Peng, J.~C. Shen, J.~Zhang, and K.~B. Letaief, ``{Backhaul-Aware Caching
  Placement for Wireless Networks},'' in \emph{Proc. IEEE Global Commun. Conf.
  (GLOBECOM)}, Dec. 2015, pp. 1--6.

\bibitem{Blasz2015}
B.~Blaszczyszyn and A.~Giovanidis, ``Optimal geographic caching in cellular
  networks,'' in \emph{Proc. IEEE Int. Conf. Commun. (ICC)}, Jun. 2015, pp.
  3358--3363.

\bibitem{Kreishah2015}
A.~Khreishah and J.~Chakareski, ``{Collaborative caching for
  multicell-coordinated systems},'' in \emph{Proc. IEEE Conf. Comput. Commun.
  Wrkshp. (INFOCOM WKSHPS)}, Apr. 2015, pp. 257--262.

\bibitem{Shanmugam2013}
K.~Shanmugam, N.~Golrezaei, A.~G. Dimakis, A.~F. Molisch, and G.~Caire,
  ``{FemtoCaching: Wireless Content Delivery Through Distributed Caching
  Helpers},'' \emph{IEEE Trans. Inf. Theory}, vol.~59, no.~12, pp. 8402--8413,
  Dec. 2013.

\bibitem{Guan2014}
Y.~Guan, Y.~Xiao, H.~Feng, C.~Shen, and L.~J. Cimini, ``Mobicacher:
  Mobility-aware content caching in small-cell networks,'' in \emph{Proc. IEEE
  Global Commun. Conf.}, Dec. 2014, pp. 4537--4542.

\bibitem{Yang2014}
C.~Yang, Z.~Chen, Y.~Yao, B.~Xia, and H.~Liu, ``{Energy efficiency in wireless
  cooperative caching networks},'' in \emph{Proc. IEEE Int. Conf. Commun.
  (ICC)}, Jun. 2014, pp. 4975--4980.

\bibitem{Asadi}
A.~Asadi, Q.~Wang, and V.~Mancuso, ``{A Survey on Device-to-Device
  Communication in Cellular Networks},'' \emph{IEEE Commun. Surveys Tuto.},
  vol.~16, no.~4, pp. 1801--1819, $4^{th}$ quarter 2014.

\bibitem{Ji2016}
M.~Ji, G.~Caire, and A.~F. Molisch, ``{Wireless Device-to-Device Caching
  Networks: Basic Principles and System Performance},'' \emph{IEEE J. Sel.
  Areas Commun.}, vol.~34, no.~1, pp. 176--189, Jan 2016.

\bibitem{Zhang2016}
L.~Zhang, M.~Xiao, G.~Wu, and S.~Li, ``{Efficient Scheduling and Power
  Allocation for D2D-Assisted Wireless Caching Networks},'' \emph{IEEE Trans.
  Commun.}, vol.~64, no.~6, pp. 2438--2452, Jun. 2016.

\bibitem{Chen2017}
B.~Chen, C.~Yang, and Z.~Xiong, ``{Optimal Caching and Scheduling for
  Cache-Enabled D2D Communications},'' \emph{IEEE Commun. Letters}, vol.~21,
  no.~5, pp. 1155--1158, May 2017.

\bibitem{Yi2018}
C.~Yi, S.~Huang, and J.~Cai, ``{An Incentive Mechanism Integrating Joint Power,
  Channel and Link Management for Social-Aware D2D Content Sharing and
  Proactive Caching},'' \emph{IEEE Trans. Mob. Comp.}, vol.~17, no.~4, pp.
  789--802, Apr. 2018.

\bibitem{Yang2016}
C.~Yang, Y.~Yao, Z.~Chen, and B.~Xia, ``{Analysis on Cache-Enabled Wireless
  Heterogeneous Networks},'' \emph{IEEE Trans. Wireless Commun.}, vol.~15,
  no.~1, pp. 131--145, Jan. 2016.

\bibitem{Li2017}
Y.~Li, M.~C. Gursoy, and S.~Velipasalar, ``{A Delay-Aware Caching Algorithm for
  Wireless D2D Caching Networks},'' in \emph{Proc. IEEE Conf. Compu. Commun.
  Worksh. (INFOCOM WKSHPS)}, May 2017, pp. 456--461.

\bibitem{Li2018_2}
K.~Li, C.~Yang, Z.~Chen, and M.~Tao, ``{Optimization and Analysis of
  Probabilistic Caching in $N$ -Tier Heterogeneous Networks},'' \emph{IEEE
  Trans. Wireless Commun.}, vol.~17, no.~2, pp. 1283--1297, Feb. 2018.

\bibitem{Yang2018}
Z.~Yang, C.~Pan, Y.~Pan, Y.~Wu, W.~Xu, M.~Shikh-Bahaei, and M.~Chen, ``{Cache
  Placement in Two-Tier HetNets with Limited Storage Capacity: Cache or
  Buffer?}'' \emph{IEEE Trans. Commun.}, pp. 1--1, 2018.

\bibitem{Quer2018}
G.~Quer, I.~Pappalardo, B.~D. Rao, and M.~Zorzi, ``{Proactive Caching
  Strategies in Heterogeneous Networks With Device-to-Device Communications},''
  \emph{IEEE Trans. Wireless Commun.}, vol.~17, no.~8, pp. 5270--5281, Aug.
  2018.

\bibitem{Jaafar2018}
W.~Jaafar, W.~Ajib, and H.~Elbiaze, ``{Joint Caching and Resource Allocation in
  D2D-Assisted Heterogeneous Networks},'' in \emph{Proc. 14th Int. Conf. on
  Wireless and Mob. Comput., Network. and Commun. (WiMob)}, Oct. 2018, pp.
  1--8.

\bibitem{Amer2018}
R.~Amer, M.~M. Butt, M.~Bennis, and N.~Marchetti, ``{Inter-Cluster Cooperation
  for Wireless D2D Caching Networks},'' \emph{IEEE Trans. Wireless Commun.},
  vol.~17, no.~9, pp. 6108--6121, Sep. 2018.

\bibitem{Crama2011}
Y.~Crama and P.~L. Hammer, \emph{{Boolean Functions: Theory, Algorithms, and
  Applications}}.\hskip 1em plus 0.5em minus 0.4em\relax Cambridge Univ. Press,
  2011.

\bibitem{AMPL}
\BIBentryALTinterwordspacing
(2016) {AMPL: S}treamlined modeling for real optimization. [Online]. Available:
  \url{http://ampl.com}
\BIBentrySTDinterwordspacing

\bibitem{Cplex}
\BIBentryALTinterwordspacing
(2016) {IBM CPLEX O}ptimizer. [Online]. Available:
  \url{http://www-01.ibm.com/software/commerce/ \\
  optimization/cplex-optimizer/}
\BIBentrySTDinterwordspacing

\bibitem{mitzenmacher_upfal_2005}
M.~Mitzenmacher and E.~Upfal, \emph{{Probability and Computing: Randomized
  Algorithms and Probabilistic Analysis}}.\hskip 1em plus 0.5em minus
  0.4em\relax Cambridge University Press, 2005.

\end{thebibliography}

%








\end{document}